\newtheorem{theorem}{\bf{Theorem}}
\newtheorem{lemma}{\bf{Lemma}}
\newtheorem{proof}{\bf{Proof}}
\newtheorem{corollary}{\bf{Corollary}}
\DeclarePairedDelimiter\ceil{\lceil}{\rceil}
\DeclarePairedDelimiter\floor{\lfloor}{\rfloor}
\begin{document}
	
	\title{Fundamental Rate-Memory Tradeoff for Coded Caching in Presence of User Inactivity} 
	\author{Jialing Liao,~\IEEEmembership{Member,~IEEE}, Olav Tirkkonen,~\IEEEmembership{Member,~IEEE}   
		\thanks{J. Liao and O. Tirkkonen are with the Department of Communications and Networking (Comnet), Aalto University, Finland (e-mail: $\rm jialing.liao@ieee.org$, $\rm olav.tirkkonen@aalto.fi$). 
		}\thanks{This work was funded by the Academy of Finland, grant 319058.}} 
	
	\maketitle
	\begin{abstract}
	    Coded caching utilizes proper file subpacketization and coded delivery to make full use of the multicast opportunities in content delivery, to alleviate file transfer load in massive content delivery scenarios. Most existing work considers deterministic environments. An important practical topic is to characterize the impact of the uncertainty from user inactivity on coded caching. We consider a one server cache-enabled network under homogeneous file and network settings in presence of user inactivity. Unlike random or probabilistic caching studied in the literature, deterministic coded caching is considered, with the objective to minimize the worst-case backhaul load by optimizing the file subpacketization and the caching strategy. First, a coded caching method is used, where each file is split into the same type of fragments labeled using sets with fixed cardinality, and the optimality of the selected cardinality is proved. Optimal file subpacketization by splitting the file into multiple types of fragments labeled with multiple cardinalities is then discussed. We show that the closed-form optimum turns out to be given by a fixed cardinality---optimizing for user inactivity only affects file delivery, cache placement is not affected. A decentralized version is also discussed and analyzed, where each user fills its storage independently at random without centralized coordination, and user inactivity is taken into account in file delivery. Simulation results show that the optimization based centralized coded caching scheme provides  performance comparable to the ideal scenario assuming full knowledge of user inactivity in the placement phase, while decentralized caching performs slightly worse against user inactivity. 
	\end{abstract}
	
	\begin{IEEEkeywords}
		Coded caching, user inactivity, linear programming.
	\end{IEEEkeywords}
	
	\section{Introduction}\label{sec_intro}
	\IEEEPARstart{E}{dge} caching is a promising technology to deal with the high demands on backhaul by bringing popular content closer to the network edge. As such, it is a promising component for future wireless cellular networks facilitated by edge intelligence \cite{M.MaddahAliMay2014, MaddahAli2014,G.Paschos2016,Yu2018}. Caching technologies have been explored from various design aspects, considering, e.g., network topology, caching model, performance metric, control structure and mathematical tool \cite{L.Li2018}. Typical cache-enabled cellular networks include cache-enabled macro-cellular networks  \cite{Khreishah2015}, heterogeneous networks (HetNets) \cite{C.Yang2016, E.Bastug2014}, device-to-device (D2D) networks \cite{B.Chen2017}, and cloud radio access networks (CRANs) / fog RANs (F-RANs) \cite{M.Tao2016, S.H.Park2016}. A caching policy is determined in a centralized or decentralized manner with performance metrics varying from backhaul load, latency, successful transmission rate, etc., utilizing tools such as optimization, stochastic geometry and deep learning. Cache placement methods considered include deterministic, random, and online placement \cite{A.Gharaibeh2016}, with or without file spitting and erasure coding based on, e.g., maximum distance separable (MDS) codes \cite{J.Liao2017}. Content delivery, on the other hand, can be realized via multiple unicast transmissions or one multicast transmission. Overall, caching models such as Femto caching \cite{K.Shanmugam2013}, probabilistic caching \cite{Blaszczyszyn2015}, and information-theoretic coded caching (CC) \cite{M.MaddahAliMay2014} have commanded attention in the literature. By making full use of the cached content in local storage and transmissions at the backhaul to create opportunities for simultaneous coded multicasting, coded caching can considerably reduce the backhaul load, i.e. the number of coded messages transmitted in parallel, with a considerable global caching gain. 
	
	Coded caching strategies have been developed under various network settings and content properties~\cite{M.MaddahAliMay2014, MaddahAli2014, G.Paschos2016, J.Zhang2015, J.Zhang2015a, S.Wang, E.Parrinello2020,S.Jin2016, MozhganBayat, M.Ji2016, Tandon2016, N.Mital2020}. In \cite{M.MaddahAliMay2014, MaddahAli2014}, centralized and decentralized coded caching methods with homogeneous network and file settings were proposed. Nonuniform file popularity, file sizes and cache sizes were considered in \cite{J.Zhang2015, J.Zhang2015a, S.Wang}, respectively. In \cite{J.Zhang2015}, file popularity was approximated into several levels and the cache space for storing files with different popularity levels was optimized. Moreover, coded caching techniques have been improved to deal with shared storage, multi-antennas, multi-requests, and the large-scale file subpacketization that follows, by jointly considering the caching design and possible collaboration. In \cite{E.Parrinello2020}, coded caching was considered with shared caches, multi-antennas and multi-requests, where index coding was used for user-to-cache association. Multi-round delivery was utilized based on the cache replication strategy proposed in \cite{S.Jin2016, MozhganBayat}. Coded caching in D2D networks where the mobile stations act as both transmitters and receivers were discussed in~\cite{M.Ji2016}. The effective numbers of sources and destinations were  modeled in caching design. \cite{Tandon2016} jointly utilized the local storage and computing to achieve edge intelligence in F-RANs.  
	
	Most of the works on coded caching are targeted for wired networks where both network and content properties are static. However, in practice, there are many stochastic properties that impose the need to optimize coded caching against uncertainty. Coded caching needs to be studied considering the impacts of randomness caused by wireless channel fading \cite{Bayat, S.P.Shariatpanahi2019, D.Cao2019}, multiple antennas and transmission interference \cite{A.Toelli2020, S.S.Bidokhti2016}, random user behavior, such as user inactivity and mobility \cite{C.Yapar2019, Ozfatura2020}, and dynamic content popularity \cite{R.Pedarsani2016}. Coded caching over a broadcast channel between the server and the users was studied in \cite{Bayat} by joint optimization of caching and modulation.  Mapping the multicast coded messages to the symbols of a signal constellation helps the users demodulate the desired symbols more reliably. In \cite{A.Toelli2020},  coded caching was applied to a multiple-input single-output (MIMO) broadcast channel by joint designing the multigroup multicast beamforming and the coded caching policy so that it benefits from spatial multiplexing, improved interference management and multi-antenna multicasting opportunities in content delivery.  
	
	An important source of uncertainty in content delivery, especially in a mobile network scenario, is user inactivity. The main reason is that the users are free to change location between cache placement and cache delivery. With edge-optimized caching, caching is local, e.g., optimized on a per base station level. Due to mobility, a user may not be within the range of the same cache at the cache delivery phase as during cache placement. Moreover, at a given tie period of cache delivery, not all users may be requesting a file.  
	When this happens, content delivery will be targeted only for active devices. As information about user activity is not available during cache placement, the design of optimal cache placement becomes complicated. The objective of this work is to find optimal cache placement and delivery strategies in the presence of user inactivity. 
	
	\subsection {Related Work}
	
	In \cite{M.MaddahAliMay2014}, Maddah-Alo and Niesen (MAN) presented pioneering information-theoretic research on coded caching, where the network topology was deterministic without user inactivity, inspiring a substantial body of scientific work. Of relevance to this paper are \cite{MaddahAli2014, N.Mital2020, C.Yapar2019, Deng2020, Daniel2020, Q.Wang2019,Dutta2021} which consider coded caching either from the perspective of optimization, or with user inactivity, or in a decentralized manner. 
	
	Multi-server networks in a random topology were considered in \cite{N.Mital2020}, where the users were randomly connected to a fixed number of servers.  Maximum distance separable (MDS) codes were utilized to construct file pieces and thus enabled the users to recover the required file with fewer fragments from a limited number of servers out of all. This is an opposite scenario of what is discussed in this paper. Here we consider one server, and the randomness is in the set of users connected to the server. 
	
	The paper \cite{C.Yapar2019} characterized user inactivity for a cache enabled D2D network with $K$ users. Each user might be inactive independently at a given probability, thus the number of effective devices can be predicted based on probability. Considering D2D, each user can both transmit and receive content from the other $K-1$ users, and hence there is a multiplier $K-1$ in file subpacketization when all users are active. When part of the users became inactive, the number of effective users available to deliver and receive content dropped to $K-1-\alpha$.
	Given an $\alpha$, cache placement and delivery performance are analyzed.  The selection of $\alpha$ is from the perspective of performance analysis. The outage probability for successful transmission was defined as a function of $\alpha$, and one was able to choose a proper $\alpha$ with any given outage probability threshold. MDS codes were utilized for multi-server transmissions, as in~\cite{C.Yapar2019}. The D2D scenario analyzed in~\cite{C.Yapar2019} is more similar to the scenario of~\cite{N.Mital2020} with fewer transmitters available than to the scenario of interest here. Instead of probabilistic performance analysis, we pursue optimization for obtaining the best file subpacketization and caching strategy.
	
	\cite{Deng2020} and \cite{Daniel2020} provided insights on optimization based coded caching design for nonuniform file parameters. User inactivity was not considered.
	
	Decentralized coded caching has been investigated based on random cache placement operated independently at each user, which removes the need for central coordination which is not always applicable for practical wireless cellular networks \cite{MaddahAli2014,Q.Wang2019,Dutta2021}. \cite{MaddahAli2014} proposed a framework for decentralized coded caching and analyzed its application in three different typologies: tree topology, shared caches and multiple requests. As an extension to \cite{MaddahAli2014}, \cite{Q.Wang2019} provided an optimization framework for decentralized coded caching in a more general scenario with arbitrary file sizes and cache sizes. Caching parameters were optimized, aiming to minimize the worst-case or average load. \cite{Dutta2021} revisited the shared caching problem where multiple users were served by one cache. An optimal delivery scheme was proposed utilizing index coding. 
	
	In multiround delivery when multiple users share a cache~\cite{E.Parrinello2020,S.Jin2016,MozhganBayat}, in some rounds not all users are present. Thus, from a cache delivery perspective, the results of~\cite{E.Parrinello2020,S.Jin2016,MozhganBayat} directly apply to an inactive user scenario. However, the {\it cache placement} setting is different. The user caching profile is assumed known during cache placement in ~\cite{E.Parrinello2020}, decentralized caching is applied in ~\cite{S.Jin2016}, while MAN cache placement with pre-selected cardinality is assumed upfront in~\cite{MozhganBayat}. In contrast, here we consider deterministic caching in a situation where the set of active users is not known during cache placement, and optimize cache placement. 
	
	In this paper, we focus on 
	a scenario where there are several cache-enabled users connected to a single server via shared links, and where there is inactivity. Centralized and decentralized coded caching are studied for cache-enabled networks with user inactivity. 
	For centralized caching, two methods are considered. First, a method with file fragments of one size is considered, and optimal fragment size is found. Second, a general scheme is considered where each file is divided into fragments of different sizes, and fragment sizes are optimized over. It is proved that the optimal solution for cache placement is the same for the basic scenario without user inactivity, and the scenario with user inactivity. 
	Mathematical analysis and simulation results are presented to illustrate the advantages of the proposed method in terms of reducing the backhaul load against user inactivity, as well as the equivalence among the subpacketization optimization in all scenarios. Finally, the decentralized coded caching strategy is of interest in scenarios with uncertainty resulting from user inactivity.
	Analysis and simulations are provided to compare decentralized coded caching with centralized coded caching in the presence of user inactivity. 
	
	\subsection {Contributions}
	In this paper, our aim is to unlock the potential of utilizing coded caching against the uncertainty caused by user inactivity. In summary, this paper has made the following major contributions: 
	\begin{itemize}
		\item We address coded caching design in presence of user inactivity using both centralized and decentralized cache placement. The uncertainty of the inactive users causes some difficulties for caching placement design. 
		
		\item We develop centralized coded caching schemes optimizing the worst-case backhaul load of the one server shared link network via file subpacketization optimization assuming fixed cardinality and also multiple cardinalities.   
		
		\item With fixed cardinality of the fragment label set, the optimal cardinality is proved to be the same as the cache replication parameter used in Maddah-Ali-Niesen's method without user inactivity. 
		\item Considering the possible redundancy introduced by the fragments labeled with the user sets containing inactive users,  file subpacketization is also done based on multiple cardinalities. The weights for different types of fragments labeled with different cardinalities are designed. The optimal solution is proved to be the same as fixed cardinality. 
		
		\item We have utilized decentralized cache placement in a system with user inactivity and developed inactivity-aware cache delivery for decentralized caching. 
		
		\item The performance gaps in terms of backhaul load against user inactivity have been investigated between the proposed centralized method and the decentralized method, and also between the centralized method and the ideal MAN method. While the former decreases with the increase of the number of inactive users, the latter increases from $0$ to a peak point with the number of inactive users, and then go down to $0$ until all the users become inactive.     
		
		\item Simulations have shown that the proposed optimization based coded caching method shows compatible performance to the ideal scenario with full user inactivity information available in the placement phase. The decentralized method with user inactivity is easy to implement at a price of slight performance degradation. 
		
	\end{itemize}

	\subsection {Notation} 
	The notation $[b]$ denotes the set consisting of consecutive integers $\{1, 2, {\dots }, b-1, b\}$. Similarly, $[a:b]$ is used to define the set $\{a, a+1, {\dots }, b-1, b\}$ consisting of integers ranging within $[a,b]$. $W_n$ is used to refer to the $n$-th file with $|W_n|$ denoting the length of the file. A fragment of file $W_n$ is expressed as $W_{n, \mathcal \tau}$, stating that the fragment of file $n$ is stored at the users whose indices belong to the set $\mathcal \tau$. The cardinality of any set $\mathcal \tau$, i.e. the number of elements in set $\mathcal \tau$, is denoted by $|\mathcal \tau|$. For any real number $c$, $\lfloor c \rfloor$ and $\lceil c \rceil$ denote the floor and ceiling versions of $c$, respectively. The operator $\oplus$ denotes the bitwise “XOR” operation between multiple fragments.

	\subsection {Organization} 
	The organization of the rest of the paper is as follows. Section II introduces the system model used in the paper and  Section III discusses the coded caching scheme in presence of user inactivity with fixed cardinality in file subpacketization. Section IV introduces an optimization framework and its optimal solution to interpret coded caching against user inactivity based on multiple cardinalities in file subpacketization. Section V discusses the alternative decentralized coded caching scheme to deal with the uncertainty caused by user inactivity which is suboptimal but easy to implement. Section VI provides a comparison of centralized and decentralized manners, as well as the ideal MAN method, against user inactivity. Section VII presents the simulation results of the proposed coded caching schemes against user inactivity. Section VIII summarizes the paper with a conclusion and discussion of the main contributions.

	\section{System Model}\label{sec_sys}
	In this section, the network model with caching policies, as well as the content characteristics that involve the structure of the network coding, and the file popularity profiles are presented. 
	There is a base station connected to the core network with access to all the file library ($N$ files $W_1, W_2,\dots, W_N$ each with equal size $F$ and popularity), and $K$ users each with local storage of size $MF$. The users are connected to the server via error free shared links. The probability for each user to be inactive is $p$. In the cache placement phase, the user inactivity is unknown while the base station has the information of the user inactivity in the delivery phase. Assuming that in a realization, there are $I$ inactive users forming an inactive user set ${\cal I}$. To ensure the significance of the discussion, we assume there is at least one user becoming inactive and at the same time, there is at least one active user to be served, i.e. $I \in [K-1]$. The number of active users is correspondingly defined as $J=K-I$. The probability for $I$ of the $K$ users being inactive is 
	\begin{equation}
		P(I)={K \choose I} p^I (1-p)^{K-I},~I \in [K-1]. 
	\end{equation}
	
	The cached content at the local cache of a user $k$ is defined as $Z_k$. The content delivered through the backhaul via coded multicast is described as $X_{\boldsymbol{d}}$, where  $\boldsymbol{d}=(d_1,d_2,\dots,d_K)$ with $d_k \in [N], k \in [K]$ denoting the demand of user $k$.   
	
	As a common metric for measuring the performance of coded caching methods, the backhaul load is defined as the volume of content needed to be delivered via backhaul using coded multicasting. The backhaul load can be calculated both in the worst case when the active users each requesting a different file, and the average case with all types of possible demands considered. Here, the worst-case backhaul load is considered which implies that the number of files is higher than the number of users $N>K$. We aim to minimize the worst-case backhaul load by designing the caching strategy subject to file size and cache size constraints. \footnote{Unless otherwise specified, the backhaul load in the following parts of the paper refers to the worst-case backhaul load.} 
	
	\begin{figure}[htb]
		\centering
		\includegraphics[width=13cm]{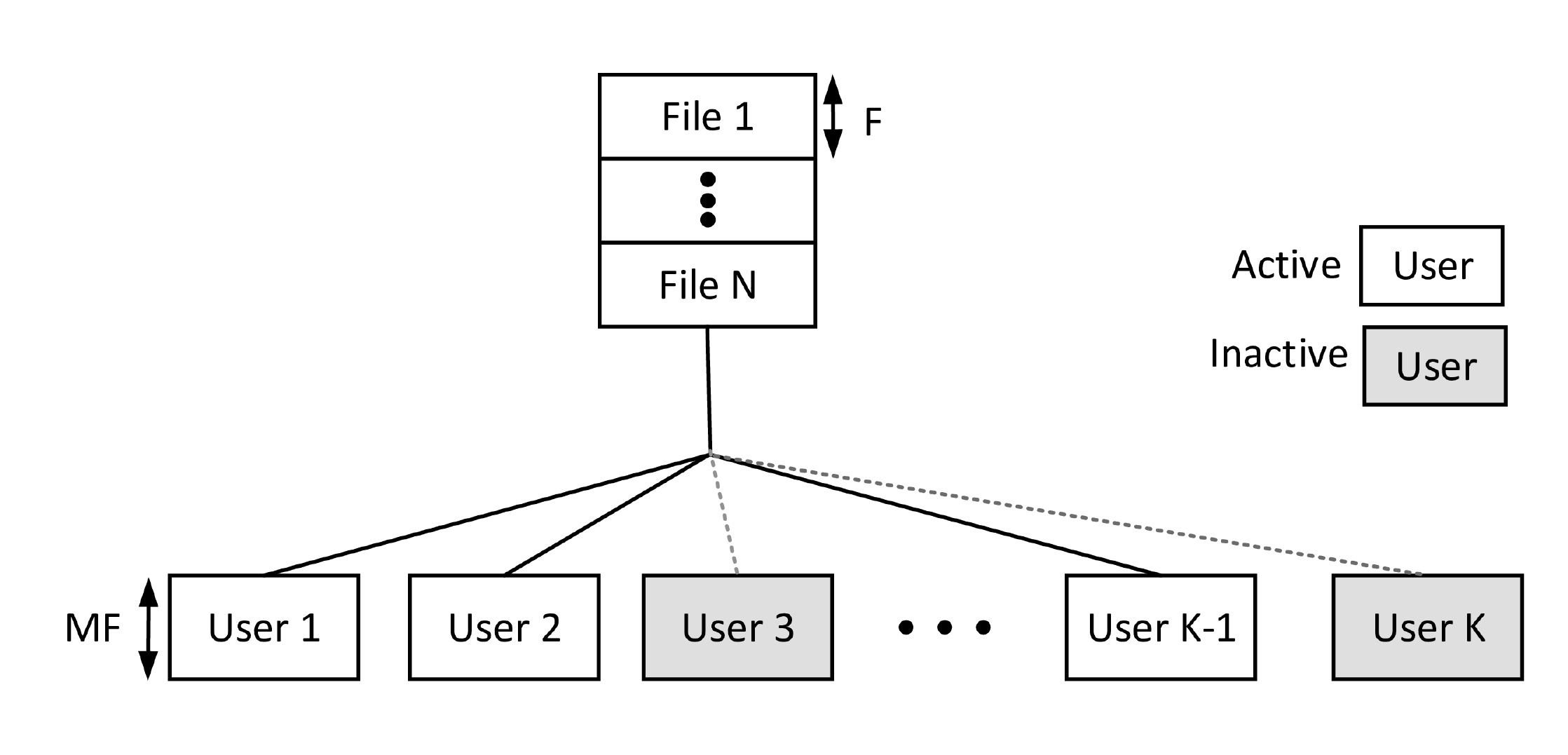}
		\caption{System model for a cache-aided network in presence of user inactivity.}\label{fig_1}  
	\end{figure}
	
	\section{coded caching in presence of inactive users}\label{method1}  
	\subsection{Content Placement and Delivery with User Inactivity} 
	We begin with the effective file subpacketization in Maddah-Ali-Niesen's method to make full usage of the multicast delivery opportunities. In the MAN method, all the users have the same cache content placement and all the files are equally cached in local storage because of the homogeneous settings. Define a variable $t \triangleq \frac{KM}{N}$, and then divide each file into ${K \choose t}$ fragments equally. $t$ is referred to as the cache replication parameter in literature \cite{ S.Wang}. \footnote{Here MAN method refers to Maddah-Ali-Niesen's method. It is assumed $t$ is an integer. If not, content sharing can be used to deal with this issue.} The fragments are indexed by all subsets of users  ${\cal \tau} \subset [K]$ of fixed cardinality $|\tau|=t$. Accordingly the fragments of file $n$ are $W_{n, {\cal \tau}}$. 
	For sake of simplicity, the set of all $t$-element subsets of $[K]$ is defined as ${\cal \zeta}=\{{\cal \tau}|{\cal \tau} \subset [K], |{\cal \tau}|=t \}$. It is assumed that user $k$ stores the fragments $W_{n, {\cal \tau}}$ of each file $n$ when $k \in {\cal \tau}, {\cal \tau}  \in {\cal \zeta}$. Hence, the cache content placement at user $k$ can be written as 
	
	\begin{equation}\label{placement}
		Z_k=(W_{n, {\cal \tau}}:{\cal \tau} \in {\cal \zeta}, k \in {\cal \tau}, n \in [N]). 
	\end{equation}
	
	In each cache, there are ${K-1 \choose t-1}$ fragments for each file, and each fragment has normalized size of $1/{K \choose t}$. Thus the cache capacity constraint holds as follows 
	\begin{equation}
		N{K-1 \choose t-1}\frac{1}{{K \choose t}}=M.
	\end{equation}
	
	Without user inactivity ($I=0$), the server can deliver a number of packets  each of which comprises coded fragments to the users to help them reconstruct their requested files:
	\begin{align} \label{X_or}
		X_{\boldsymbol{d}}=(X_{\boldsymbol{d}, {\cal S}}:{\cal S} \in {\cal \vartheta}), \\
		X_{\boldsymbol{d}, {\cal S}}=\oplus_{k \in {\cal S}}~W_{d_k, {\cal S}\setminus\{k\}}, 
	\end{align}
	where the set ${\cal S}$ has one more element than set ${\cal \tau}$ satisfying ${\cal S} \subset [K], |{\cal S}|=t+1$. The set of all ${\cal S}$ is ${\cal \vartheta}$. This coded multicast strategy works in the way that all the users are able to recover their request files using the same transmitted packets  and the cached fragments in local cache. For any user  $i$ in a particular ${\cal S}$, the linear combination $X_{\boldsymbol{d}, {\cal S}}$ can be rewritten as 
	\begin{equation}
		X_{\boldsymbol{d}, {\cal S}}=W_{d_i, {\cal S} \setminus \{i\}} \oplus (\oplus_{k \in {\cal S}, k \neq i }~W_{d_k, {\cal S} \setminus\{k\}}), 
	\end{equation}
	where $W_{d_i, {\cal S} \setminus \{i\}}$ is one of the fragments that user $i$ needs to recover the requested file $d_i$. The other fragments in linear combination $(\oplus_{k \in {\cal S}, k \neq i }~W_{d_k, {\cal S} \setminus\{k\}})$ are all cached at user $i$ according to the cache placement in \eqref{placement} due to the fact that $i \in {\cal S} \setminus\{k\}$ for any $k \in {\cal S}$ but $k \neq i$. Therefore, $W_{d_i, {\cal S} \setminus \{i\}}$ can be decoded by user $i$. Taking all types of ${\cal S} \in {\cal \vartheta}, i \in {\cal S}$ into account, user $i$ is thus able to decode all the missing fragments of the request file, i.e. $(W_{d_i, {\cal S} \setminus \{i\}}:{\cal S} \in {\cal \vartheta}$, $i \in {\cal S})=(W_{d_i, {\cal \tau}}:{\cal \tau} \in {\cal \zeta}$, $i \not\in {\cal \tau})$.  
	
	As the linear combination of several fragments via operator $\oplus$ has the same size as a single fragment, i.e. $F/{K \choose t}$, the backhaul load can be written as the size of a fragment multiplied by the number of the different sets ${\cal S}$ as follows:
	\begin{equation}\label{R}
		R=\frac{F}{{K \choose t}} {K \choose t+1}=F~\frac{K-t}{t+1}. 
	\end{equation} 
	Because file size $F$ performs as a multiplier in backhaul loads, unit file size is assumed in the following for briefness. 
	
	
	In~\cite{E.Parrinello2020}, multi-round delivery to users sharing caches is considered. In a given delivery round, a subset of cache profiles may be present. Thus, for a given round, the cache delivery problem addressed in \cite{E.Parrinello2020} is the same as the delivery problem in a situation with a set of inactive users. We shall thus use the cache delivery scheme of  \cite{E.Parrinello2020}, rephrased to an inactive user scenario.

		Assuming there are $I$ inactive users forming an inactive user set ${\cal I} \subset [K]$, we utilize a general cardinality $l \in [K]$ in file subpacketization instead of the cache replication parameter $t=KM/N$ that is used in the MAN method. The optimal value of $l$ will be optimized in Subsection \ref{method2}. The transmitted packets would be 
		\begin{align} \label{Xd_inact}
			    X_{\boldsymbol{d}}=\left\{\begin{array}{cl}
				(X_{\boldsymbol{d}, {\cal S}}:{\cal S} \in {\cal \vartheta}), & \mbox{if} ~l+1 > I,  \\
				\\
				(X_{\boldsymbol{d}, {\cal S}}:{\cal S} \in {\cal \vartheta},{\cal S} \not\subset {\cal I}), &  \mbox{if}~l+1 \le I,
			\end{array}\right. 
		\end{align} 
		where the packet given any subset ${\cal S}$ is   
		\begin{equation}\label{Xds_inact}
			X_{\boldsymbol{d}, {\cal S}}=\oplus_{k \in {\cal S}, k \notin {\cal I}}~W_{d_k, {\cal S}\setminus k}.
		\end{equation} 
		The worst case backhaul load then becomes~\cite{E.Parrinello2020}\footnote{${K \choose l+1}/{K \choose l}=-1+\frac{K+1}{l+1}$ is used for simplifying the computation.} 
		\begin{align}\label{R_inact-obj}
			R(l)=\left\{\begin{array}{cl}
				\frac{1}{{K \choose l}} {K \choose l+1}, & \mbox{if}~l+1> I,\\
				\frac{1}{{K \choose l}} \left[{K \choose l+1}-{I \choose l+1}\right], &\mbox{if}~l+1\le I. 
			\end{array}\right. 
		\end{align}

	Comparing \eqref{R} and \eqref{R_inact-obj}, it is obvious that the worst case backhaul load in presence of user inactivity is either the same as the one derived without user inactivity when $t+1 > I$ or is smaller than the one without user inactivity when $t+1 \le I$. 
	For clarification, we summarize the procedure of the proposed centralized coded caching scheme in presence of user inactivity in Alg.~\ref{alg_cen}, based on (\ref{R_inact-obj}). Note that in Alg.~\ref{alg_cen}, $l^*$ denotes the optimal cardinality based on file subpacketization optimization given by $l^*=KM/N$, which shall be carefully proved in Subsection \ref{method2} and Section \ref{method3}. 
	
		\begin{algorithm} 
		\caption{Centralized Coded Caching in Presence of User Inactivity}
		\label{alg_cen}
		\begin{algorithmic}[1]
			\STATE \textbf{procedure} {\large P}LACEMENT
			\STATE ~~~~$l \leftarrow l^*$ (the optimal solution in file subpacketization optimization: $l^*=KM/N$)
			\STATE ~~~~${\cal \zeta} \leftarrow \{{\cal \tau}|{\cal \tau} \subset [K], |{\cal \tau}|=l \}$
			\STATE ~~~~\textbf{for} $n \in [N]$ do 
			\STATE~~~~~~~split $W_n$ into $(W_{n, {\cal \tau}}|{\cal \tau} \in {\cal \zeta})$ with identical size
			\STATE~~~~\textbf{end for}
			\STATE ~~~~\textbf{for} $k \in [K]$ do 
			\STATE~~~~~~~user $k$ caches $Z_k \leftarrow (W_{n, {\cal \tau}}|{\cal \tau} \in {\cal \zeta}, k \in {\cal \tau},n \in [N])$ 
			\STATE~~~~\textbf{end for}
			\STATE \textbf{end procedure} \\
			Users make requests $\boldsymbol{d}$ given the number and identity of the inactive users $(I, {\cal I})$
			\STATE \textbf{procedure} {\large D}ELIVERY	
			\STATE ~~~~$l \leftarrow KM/N$
			\STATE ~~~~${\cal \vartheta} \leftarrow \{{\cal S}|{\cal S} \subset [K], |{\cal S}|=l+1\}$
			\STATE ~~~~\textbf{if} $l+1>I$ do\\
			\STATE~~~~~~~$X_{\boldsymbol{d}} \leftarrow (\oplus_{k \in {\cal S}, k \notin {\cal I}}~W_{d_k, {\cal S}\setminus \{k\}}:{\cal S} \in {\cal \vartheta})$
			\STATE ~~~~\textbf{else if} $l+1 \le I$ do\\
			\STATE~~~~~~~$X_{\boldsymbol{d}} \leftarrow (\oplus_{k \in {\cal S}, k \notin {\cal I}}~W_{d_k, {\cal S}\setminus\{k\}}:{\cal S} \in {\cal \vartheta},{\cal S} \not\subset {\cal I})$
			\STATE~~~~\textbf{end if}  
			\STATE \textbf{end procedure}
		\end{algorithmic}
	\end{algorithm}
	While in (\ref{R_inact-obj}) and Alg.~\ref{alg_cen} a framework of coded caching in presence of user inactivity is presented, there is an essential problem remaining:  {\it What is the optimal cache placement policy if it is known at the cache placement that a random set of $I$ out of $K$ users will be inactive at the time of cache delivery?} This motivates our investigation on file subpacketization optimization in the following.  
	
	\subsection{Optimizing Coded Caching in Presence of User Inactivity}
When optimizing cache placement in presence of user inactivity, a remark is in place.  It is fair to assume that the server has no information of user inactivity in the content placement phase while this type of knowledge becomes available in the content delivery phase. As a result, the server will only target the active users for content delivery based on a given cache content placement. The coded packets to be transmitted from the server are only for $J=K-I$ active users. Thus the number of inactive users $I$ affects at least content delivery and backhaul load. Moreover, to determine the packets $Z_{\boldsymbol{d}}$ to be transmitted, full information about file requests is needed, which directly indicates the set inactive users ${\cal I}$. As backhaul load, either worst case or average, is usually selected as a performance metric in coded caching design, $I$ has to appear in the objective of cache placement optimization, not only in cache delivery. This conflicts with the assumption that user inactivity information is not available before content delivery. For the sake of analysis, we assume that the server knows the estimated number of inactive users $I$ already in the cache placement phase, while not knowing which set of users ${\cal I}$  will become inactive. This information may be based, e.g., on historical information.
It turns out that the optimal solutions to file subpacketization problems both with fixed cardinality and with multiple cardinalities, will be independent on $I$.  
This demonstrates that in presence of user inactivity, where the set ${\cal I}$ is not known in the cache placement phase, knowledge about the cardinality of ${\cal I}$ is of no use. The optimal schemes found here are thus optimal also in situations where $I$ is not known during cache placement.

	\subsection{Subpacketization Optimization with Fixed Cardinality}\label{method2}
	
	The subpacketization method and cache content placement used here is based on a group of subsets ${\cal \tau} \subset [K]$ with $|{\cal \tau}|=l$ to create possible multicast opportunities in the content delivery phase. The optimal choice for the cardinality of ${\cal \tau}$ can be interpreted from an optimization perspective. 
	
	Firstly, we consider the normal case without user inactivity, and define an multiple cardinalities $l=|{\cal \tau}|, l \in [K]$, to replace the fixed $t=KM/N$. The optimal $l$ should give the lowest backhaul load $R(l)={K \choose l+1}/{K \choose l}=\frac{K-l}{l+1}$ while satisfying cache capacity constraint ${K-1 \choose l-1}/{K \choose l}=l/K \le M/N $. The optimization problem can be rewritten into
	\begin{subequations}\label{opt-man} 
		\begin{align}
			\min_l&~~~~-1+\frac{K+1}{l+1} \label{optMAN_0}\\
			{\rm s.t.} &~~l \le \frac{KM}{N}, l \in [K].\label{optMAN_1}
		\end{align}
	\end{subequations} 
	Since the objective \eqref{optMAN_0} decreases with respect to $l$, the optimal solution is the largest $l$ with cache capacity constraint satisfied with equality, i.e. $l=\frac{KM}{N}$, which agrees with the cache replication parameter $t=(KM)/N$ used in the MAN method. In particular, when $(KM)/N$ is not integer, the optimal cardinality becomes $l=\floor{KM/N}$, which works for the scenario with user inactivity as well.  
	
	Similarly, we substitute the worst case backhaul load with user inactivity \eqref{R_inact-obj} into the objective function and again replace the fixed $t=KM/N$ with a variable $l$  to be optimized. The coded caching optimization with user inactivity after simplification can be written as 
	\begin{subequations}\label{optFix} 
		\begin{align}
			\min_l&~~~~R(l)  \label{optFix_0} \\
			{\rm s.t.} &~~l \le \frac{KM}{N},~~l \in [K-1]. \label{optFix_1}
		\end{align}
	\end{subequations} 
	where the objective function is given by \eqref{R_inact-obj}.
	
	To find the optimal $l$, the analysis of (\ref{optFix}) can be divided into two parts. We have 
	
	\begin{lemma}\label{lemma-optFix-obj} 
		The backhaul load $R(l)$ of (\ref{R_inact-obj}) is a decreasing function of $l$ in the interval $l \in [I-1]$. 
	\end{lemma}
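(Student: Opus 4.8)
The plan is to prove the claim by showing that the difference $R(l)-R(l+1)$ is strictly positive for every $l$ with $l$ and $l+1$ both lying in $[I-1]$, i.e.\ for $1\le l\le I-2$; this is exactly what ``decreasing on $[I-1]$'' amounts to for an integer argument. For all such $l$ one has $l+1\le I-1<I$, so both $R(l)$ and $R(l+1)$ are given by the second branch of \eqref{R_inact-obj}, and I may work throughout with $R(l)=\binom{K}{l+1}/\binom{K}{l}-\binom{I}{l+1}/\binom{K}{l}$. (If $I\le 2$ the set $[I-1]$ has at most one point and the statement is vacuous, so I assume $I\ge 3$; on the range $1\le l\le I-2$ every quantity used below is positive since $I-j\ge I-l\ge 2$ for $0\le j\le l$.)

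First I would split $R(l)=A(l)-B(l)$ with $A(l)=\binom{K}{l+1}/\binom{K}{l}=(K-l)/(l+1)$ and $B(l)=\binom{I}{l+1}/\binom{K}{l}$, and treat the two differences separately. A short computation gives $A(l)-A(l+1)=(K+1)/[(l+1)(l+2)]$. For $B$, I would introduce the single quantity $\rho:=\binom{I}{l+1}/\binom{K}{l+1}=\prod_{j=0}^{l}\frac{I-j}{K-j}$ and rewrite $B(l)=\rho\,(K-l)/(l+1)$ and $B(l+1)=\rho\,(I-l-1)/(l+2)$, so that $\rho$ factors out. Collecting over the common denominator $(l+1)(l+2)$ and simplifying the numerator of $B(l)-B(l+1)$ (the only mildly tedious step, giving $(K-I)l+2K-I+1$, which I then regroup as $(K-I)(l+1)+(K+1)$) yields the compact identity
\begin{equation}
R(l)-R(l+1)=\frac{(K+1)(1-\rho)-\rho(K-I)(l+1)}{(l+1)(l+2)}.
\end{equation}

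The crux then reduces to the single inequality $(K+1)(1-\rho)>\rho(K-I)(l+1)$, equivalently $1/\rho>1+(K-I)(l+1)/(K+1)$, and I expect this to be the main obstacle. I would resolve it by bounding $1/\rho=\prod_{j=0}^{l}\frac{K-j}{I-j}=\prod_{j=0}^{l}\bigl(1+\tfrac{K-I}{I-j}\bigr)$ from below: since every factor exceeds $1$, the Weierstrass product inequality gives $1/\rho\ge 1+(K-I)\sum_{j=0}^{l}\frac{1}{I-j}$, and bounding each summand below by its smallest value $1/I$ gives $1/\rho\ge 1+(K-I)(l+1)/I$. Finally, because there is at least one active user we have $I\le K-1<K+1$, hence $1/I>1/(K+1)$, which upgrades the bound to the strict inequality $1/\rho>1+(K-I)(l+1)/(K+1)$ needed above.

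Putting these together shows $R(l)-R(l+1)>0$ for all $1\le l\le I-2$, so $R(l)$ is strictly decreasing over $l\in[I-1]$, as claimed. I would remark that the strictness comes entirely from the step $I\le K-1<K+1$, i.e.\ from the presence of at least one active user, which is exactly the standing assumption $I\in[K-1]$; no further case analysis is required, since the whole argument stays within the second branch of \eqref{R_inact-obj} on this range.
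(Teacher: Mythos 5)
Your proof is correct and follows essentially the same route as the paper's: although you work with the difference $R(l)-R(l+1)$ while the paper works with the ratio $R(l+1)/R(l)$, both arguments reduce to the identical crux inequality $\prod_{j=0}^{l}\frac{K-j}{I-j} > 1 + \frac{(K-I)(l+1)}{K+1}$, and both establish it by the same product-versus-sum bound (your Weierstrass step is the paper's ``first two terms of the binomial expansion'' after bounding each factor by $K/I$), closed in each case by $I \le K-1 < K+1$. No gaps; your $\rho$-factoring just makes the bookkeeping cleaner than the paper's factorial manipulations.
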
 
	\begin{proof} 
		See Appendix B. 
	\end{proof} 
	
	We can now show 
	\begin{theorem}\label{theorem-inactive-t} 
		If  MAN cache placement with cardinality $l$ of all caching subsets $\tau$ is used in the presence of user inactivity, minimum worst case backhaul load is achieved with cardinality  $l=\frac{KM}{N}$. 
	\end{theorem}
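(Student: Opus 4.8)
The plan is to show that the objective $R(l)$ in \eqref{optFix} is monotonically decreasing over the \emph{entire} feasible index range $l \in [K-1]$, not merely on the sub-interval covered by Lemma~\ref{lemma-optFix-obj}. Once this is established, minimizing a decreasing function subject to $l \le KM/N$ and $l \in [K-1]$ forces the optimum to the largest admissible index, which is precisely $l = KM/N = t$ (note that in the worst-case regime $N > K$ we have $M/N < 1$, so $KM/N \le K-1$ and this value is feasible). Thus the whole task reduces to a monotonicity argument for the piecewise objective $R(l)$ of \eqref{R_inact-obj}.

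First I would split $[K-1]$ along the boundary dictated by the two branches of \eqref{R_inact-obj}. On the upper branch $l \ge I$ (equivalently $l+1 > I$), the load collapses to the inactivity-free expression $R(l) = {K \choose l+1}/{K \choose l} = -1 + \frac{K+1}{l+1}$, which is strictly decreasing in $l$ because $\frac{K+1}{l+1}$ decreases. On the lower branch $l \le I-1$ (equivalently $l+1 \le I$), Lemma~\ref{lemma-optFix-obj} already certifies that $R(l)$ is decreasing. So each piece is individually monotone in the correct direction; what remains is to glue them.

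The crux is the junction between the two branches, i.e.\ checking that $R$ does not jump upward when $l$ passes from $I-1$ to $I$. I would compute $R(I-1) - R(I)$ directly. Using ${I \choose I} = 1$ and the ratio ${K \choose I}/{K \choose I-1} = (K-I+1)/I$, the second-branch value simplifies to $R(I-1) = \frac{K-I+1}{I} - \frac{1}{{K \choose I-1}}$, while the first-branch value is $R(I) = \frac{K-I}{I+1}$. Combining the first two terms over a common denominator yields
\begin{equation}
R(I-1) - R(I) = \frac{K+1}{I(I+1)} - \frac{1}{{K \choose I-1}}.
\end{equation}
Hence monotonicity across the seam reduces to the single inequality $(K+1){K \choose I-1} \ge I(I+1)$, which I would verify holds for every $I \in [K-1]$ (it is immediate at the endpoints $I=1$ and $I=K-1$, and ${K \choose I-1}$ grows fast enough in between). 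This junction inequality is the one genuinely non-routine step; everything else is bookkeeping.

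With both branches decreasing and $R(I-1) \ge R(I)$, the objective $R(l)$ is decreasing on all of $[K-1]$. Therefore the constrained minimum in \eqref{optFix} is achieved at the boundary $l = KM/N = t$, independently of the value of $I$, which proves the theorem. I expect the boundary-stitching inequality to be the main obstacle, since the two one-sided monotonicity claims are either given by Lemma~\ref{lemma-optFix-obj} or elementary.
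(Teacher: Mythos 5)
Your proof is correct, and it differs from the paper's in the one step that matters. The paper uses the same two-branch decomposition and the same Lemma~\ref{lemma-optFix-obj}, but then finishes by directly comparing the two \emph{regional} minimizers: it computes $R(I-1)-R\left(\frac{KM}{N}\right)$ in the case $KM/N \ge I$ and shows this difference is positive via a rather involved manipulation (the quantities $B$ and $\Gamma$ in the paper's proof). You instead compare the two \emph{adjacent} points straddling the seam, $R(I-1)-R(I) = \frac{K+1}{I(I+1)} - \frac{1}{{K \choose I-1}}$, and reduce everything to the inequality $(K+1){K \choose I-1} \ge I(I+1)$, which indeed holds for all $I \in [K-1]$ (trivial at $I=1$; for $2 \le I \le K-1$ use ${K \choose I-1} \ge K$, so the left side is at least $K(K+1) > (K-1)K \ge I(I+1)$). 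Your route buys two things: the seam inequality is elementary and does not involve $M$ at all, so it is cleaner than the paper's cross-region computation; and it yields the stronger structural fact that $R(l)$ is monotonically decreasing on all of $[K-1]$, which handles both cases ($KM/N \ge I$ and $KM/N < I$) uniformly rather than by the paper's feasibility case split. Two cosmetic points: your parenthetical ``$N>K$ implies $M/N<1$'' is a non sequitur --- what you actually need is the standing assumption $M \le N$ (equivalently $t \le K$), which the system model provides; and, like the paper, you implicitly use that $t = KM/N$ is an integer so that the boundary of the constraint set is attained, an assumption the paper states in Section~\ref{method1}.
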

	
	\begin{proof}
		We first treat separately the minimization in the regions $l>I-1$ and $l\leq I-1$ separately. 
		In the region $l > I-1$, the backhaul load \eqref{R_inact-obj} is the same as the one without user inactivity \eqref{R}, for which the optimal cardinality has been proved to be $l=MK/N$. 
		Thus if $I \le KM/N$, this yields the minimum backhaul in this region, while if $I > KM/N$, all of this region is infeasible. 
		
		According to Lemma \ref{lemma-optFix-obj}, the minimum backhaul in the second region $l\leq I-1$ is achieved at the maximal feasible point $l=\min(KM/N,I-1)$. 
		
		It remains to find the smaller value of the solutions in the two regions, in the case $KM/N \geq I$. For this, we compute 
		\begin{eqnarray}
			R(I-1)-R\left(\frac{KM}{N}\right) &=&
			\frac{{K \choose I}-{I \choose I}}{{K \choose I-1}} -\frac{{K \choose KM/N +1}}{{K \choose KM/N}} \cr
			&=&\frac{(I-1)!}{I (KM/N+1) K!} \times B \, \Gamma, 
		\end{eqnarray}
		where $B=I(KM/N+1)(K-I+1)!$ and 
		\begin{eqnarray}
			\Gamma+1 &=& \frac{(K+1)(KM/N+1-I)}{I(KM/N+1)(K-I+1)! {K \choose I-1}}  \cr
			&\ge& \frac{K+1}{KM/N+1} > 1.
			\label{fixProof}
		\end{eqnarray}
		This completes the proof.
	\end{proof}
	
	Theorem~\ref{theorem-inactive-t} thus states that the optimal cardinality in file subpacketization for coded caching in presence of user inactivity in the whole interval $I \in [K-1]$ is always $l=KM/N$, which is the same as $t$ used in MAN method without user inactivity. 

	\section{Subpacketization with Multiple Cardinalities}\label{method3} 
	
	The analysis in previous section contains redundancy in the content placement caused by caching the fragments related to the inactive users. For instance, the fragments $(W_{n, {\cal \tau}}:k \in {\cal \tau},~{\cal \tau}\cap{\cal I} \neq \emptyset, n \in [N])$ stored in an active user $k$ seem to take up storage space without contributing in reducing backhaul load. The optimal file subpacketization and cache placement is to cache only the fragments corresponding to the active users, e.g. $(W_{n, {\cal \tau}}:k \in {\cal \tau}, {\cal \tau} \cap {\cal I} = \emptyset, n \in [N])$. However, the information about user inactivity is unknown in the cache placement phase, which means that the set ${\cal I}$ can not be specified. 
	
	Given an inactivity probability, the probability of a user caching a file fragment in vain grows with fragment label cardinality  $|\tau|$. Above, we found that the optimal cardinality is given by $l=KM/N$ if all labels have the same cardinality. As the number of active users decreases, there is a possibility that having labels of multiple cardinalities might lead to more efficient use of the caches. In \cite{Daniel2020}, the multiple cardinalities based file subpacketization is utilized to deal with the heterogeneous of file popularity which imposes multilevel file subpacketization in terms of popularity.

	For this, we split each file based on subsets with a series of different cardinalities $l \in [0:K]$ instead of a fixed number $t$. That is to say, each file is split into $2^K$ fragments labeled with $W_{n, {\cal A}^l}:{\cal A}^l \subset [K], |{\cal A}^l|=l,  l \in [0:K]$. Similarly, we assume that in the cache placement phase, a fragment is cached by user $k$ if its fragment label ${\cal A}^l, l \in [0:K]$ includes $k$:
	\begin{equation}
		Z_k=(W_{n, {\cal A}^l}:k \in {\cal A}^l, {\cal A}^l \subset [K],|{\cal A}^l|=l, l \in [0:K], n \in [N]). 
	\end{equation}
	According to the cardinality $l$, the fragments for each file $n$ can be divided into $K+1$ groups as $W_n^l=(W_{n, {\cal A}^l}:{\cal A}^l \subset [K], |{\cal A}^l|=l), l=0,1, \dots,K.$ There are ${K \choose l}$ types of fragments in fragment group $W_n^l$. In total, there are $\sum_l {K \choose l}=2^K$ different fragments for each file. By adjusting the weights of the fragment groups $W_n^l, l \in [0:K]$ for each file, the space that each fragment group takes from the cache is decided accordingly. The number of effective users involved in the caching design can then be controlled to some degree. 
	
	It is assumed that the fragments in the same group $l$ have the same size. Define a weight vector as ${\boldsymbol \alpha}\triangleq [\alpha^0,\alpha^1,\dots, \alpha^K]$ with $\alpha^l$ denoting the size of a fragment in fragment group $l$ normalized by file size $F$, i.e. $\alpha^l=|W_{n, {\cal A}^l}|, n \in [N]$. Hence, the size of fragment group $l$ of file $n$, $W_n^l$, is ${K \choose l} \alpha^l F$.         
	
	Now the file size and cache capacity constraints are:
	\begin{subequations} \label{c_fs} 
		\begin{align}
			&~~~~\sum_{l=0}^K {K \choose l} \alpha^l=1,  \label{c_fs0}  \\
			&\sum_{l=1}^K \alpha^l {K-1 \choose l-1} \le \frac {M}{N},  \label{c_fs1}  \\
			&~0 \le \alpha^l,~l=0,1\dots,K. \label{c_fs2} 
		\end{align}
	\end{subequations}
	In the case, the content to be delivered to the users via backhaul can be derived as (with referring to Appendix A)
	\begin{align*} 
		X_{\boldsymbol{d}}=\left\{\begin{array}{cl}
		(X_{\boldsymbol{d},{\cal A}^{l+1}}:{\cal A}^{l+1} \subset [K], |{\cal A}^{l+1}|=l+1), ~~\text{if}~l+1>I, \\
			(X_{\boldsymbol{d},{\cal A}^{l+1}}:{\cal A}^{l+1} \not \subset {\cal I},{\cal A}^{l+1} \subset [K], |{\cal A}^{l+1}|=l+1), \\
			 \quad\quad\quad\quad\quad\quad\quad\quad \quad\quad\quad\quad\quad\quad 
			 \text{if}~1 < l+1 \le I,\\
			 \quad~X_{\boldsymbol{d},{\cal A}^{l}},\quad\quad\quad\quad\quad\quad \text{if}~l=0,  
		\end{array}\right. 
	\end{align*}
where the packet corresponding to a given fragment label set ${\cal A}^{l+1}$ is given by 	
	\begin{align*} 
		X_{\boldsymbol{d},{\cal A}^{l+1}}=\left\{\begin{array}{cl}
			\oplus_{k \in {\cal A}^{l+1}, k \notin {\cal I}}~~W_{d_k, {\cal A}^{l+1}\setminus{\{k\}}} ,&\text{if}~l+1>I, \\
			\oplus_{k \in {\cal A}^{l+1}, k \notin {\cal I}}~~W_{d_k, {\cal A}^{l+1}\setminus{\{k\}}},&\text{if}~1 < l+1\le I,\\ 
			~~W_{d_k, {\cal A}^l},&\text{if}~l=0.   
		\end{array}\right. 
	\end{align*}

	In particular, there is an exceptional case for $l=0$ when ${\cal A}^0$ equals to $\emptyset$ and thus none of the users has stored the subfiles $W_{n,{\cal A}^0}, n \in [N]$. Accordingly, the backhaul load normalized by file size $F$ is written as
	\begin{align}
		\quad\quad R({\boldsymbol \alpha})&=(K-I)\alpha^0+\sum_{l=I}^{K-1} {K \choose l+1} \alpha^l
		  + \sum_{l=1}^{I-1} \bigg[{K \choose l+1} - {I \choose l+1} \bigg]\alpha^l \notag\\
		&=\sum_{l=0}^{K-1} {K \choose l+1} \alpha^l-\sum_{l=0}^{I-1} {I \choose l+1} \alpha^l.  
	\end{align} 
	The caching design turns to solving a linear programming of minimizing the backhaul load subject to the file size and cache capacity constraints \eqref{c_fs}: 
	
	\begin{equation}\label{opt0} 
		\min_{{\boldsymbol \alpha}}~~~~  R({\boldsymbol \alpha})~~~~ 
		{\rm s.t.}~~\eqref{c_fs0}-\eqref{c_fs2}.  
	\end{equation}
	
	Existing solvers, e.g. CVX \cite{M.Grant2013}, can be used to solve problem \eqref{opt0} with $K+1$ variables and $K+3$ constraints \cite{M.Tao2019,M.Tao2017}. It is important to figure out the structure of the optimal solution which is discussed below.

	To simplify the problem, we replace the original variables with the variables satisfying $\beta^l={K \choose l} \alpha^l, l \in [0:K]$. Thus we derive a new weight vector as ${\boldsymbol \beta} \triangleq [\beta^0,\beta^1,\dots,\beta^K]$. In this case, problem \eqref{opt0} can then be rewritten into   
	\begin{subequations}\label{opt1} 
		\begin{align}  
			\min_{\{\beta^l\}} ~~~~&\sum_{l=0}^{K-1} \frac{K-l}{l+1} \beta^l-\sum_{l=0}^{I-1} {I \choose l+1}/{K \choose l}\beta^l \label{R1_0}\\
			{\rm s.t.} ~~~~~~&\sum_{l=0}^K \beta^l=1, \label{R1_1}   \\
			~~~~~~~~~~&\sum_{l=1}^K l \beta^l \le t, \label{R1_2}   \\
			~~~~~~&0 \le \beta^l,~l \in [0:K]. \label{R1_3} 
		\end{align}
	\end{subequations} 
	
	It can be observed that the terms related to the inactive users in the objective actually destroy the similarity among the combination terms in the objective, and thus it becomes challenging to find a closed form solution to problem \eqref{opt1}. 
	
	To proceed, we analyze the properties of the objective function and the linear constraints. The discussion is generally accomplished in \textit{four} steps each of which is formulated as a Lemma given below, discussing \textit{the objective, the constraints, the structure of the optimal solution and an exceptional case.}
	
	\begin{lemma}\label{lemma-step1} 
		The first derivative of coefficients in the objective function \eqref{R1_0} is negative while the second derivative is positive when $I \in [K-2]$ and equal to $0$ when $I=K-1$. 
	\end{lemma}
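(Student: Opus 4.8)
The plan is to collapse the piecewise objective into a single closed form and then study its first two derivatives in $l$. Writing \eqref{R1_0} as $\sum_{l=0}^{K-1}c_l\beta^l$ with $c_K=0$, the two sums merge into
\[
c_l=\frac{\binom{K}{l+1}-\binom{I}{l+1}}{\binom{K}{l}},
\]
and since $\binom{I}{l+1}=0$ as soon as $l\ge I$, this one expression covers both the corrected range $l\le I-1$ and the plain range $l\ge I$. The first step I would carry out is to rewrite it, by expanding the binomial ratios and telescoping, as
\[
c_l=\frac{K-l}{l+1}\Bigl(1-Q(l)\Bigr),\qquad Q(l)=\prod_{j=0}^{l}\frac{I-j}{K-j},
\]
and record the elementary facts that each factor $\tfrac{I-j}{K-j}$ lies in $(0,1]$ for $j\le I-1$ and vanishes at $j=I$, so $Q(l)\in[0,1)$ with $Q(l)=0$ automatically once $l\ge I$. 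On the range $l\ge I$ the coefficient is therefore the plain MAN coefficient $\tfrac{K-l}{l+1}=\tfrac{K+1}{l+1}-1$, whose derivatives $-\tfrac{K+1}{(l+1)^2}<0$ and $\tfrac{2(K+1)}{(l+1)^3}>0$ already give the lemma there; the real content lies in the corrected range $l\le I-1$.

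The case $I=K-1$ is immediate and pins down the ``equal to $0$'' claim: every factor telescopes, $Q(l)=\tfrac{K-1-l}{K}$, hence $c_l=\tfrac{K-l}{l+1}\cdot\tfrac{l+1}{K}=\tfrac{K-l}{K}$, which is affine, with first derivative $-1/K<0$ and second derivative identically $0$. For the first-derivative claim in general I would differentiate the product form to get
\[
c'=-\frac{K+1}{(l+1)^2}\bigl(1-Q\bigr)-\frac{K-l}{l+1}\,Q',
\]
where the first term is $\le 0$ and, because $Q$ is decreasing ($Q'<0$), the second is $\ge 0$; the task is to show the first term dominates, which I would do by bounding $|Q'|$ through the digamma identity $(\ln Q)'=\psi(I-l)-\psi(K-l)<0$ and comparing it against $\tfrac{K+1}{(l+1)^2}$.

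The genuine obstacle is the convexity (second-derivative) claim for $1\le I\le K-2$. Differentiating once more,
\[
c''=\frac{2(K+1)}{(l+1)^3}\bigl(1-Q\bigr)+\frac{2(K+1)}{(l+1)^2}\,Q'-\frac{K-l}{l+1}\,Q'',
\]
is a sum of terms of mixed sign, since $Q'<0$ and $Q''$ is of indeterminate sign ($Q$ is positive but only log-concave, $(\ln Q)''=\psi'(K-l)-\psi'(I-l)<0$), so positivity cannot be read off termwise. My plan is to substitute $Q'=Q\,(\ln Q)'$ and $Q''=Q\bigl[((\ln Q)')^2+(\ln Q)''\bigr]$, clear the positive factor $\tfrac{K-l}{l+1}$, and reduce the claim to a single scalar inequality expressing that the strong convexity $\tfrac{2(K+1)}{(l+1)^3}$ of the leading factor outweighs the curvature injected by $Q$, with equality forced only in the telescoping case $I=K-1$ already treated. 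Should the polygamma bookkeeping prove unwieldy, the fallback I would use is to work entirely with the discrete second difference $c_{l+1}-2c_l+c_{l-1}$, exploiting the one-step recursion $Q(l+1)=Q(l)\,\tfrac{I-l-1}{K-l-1}$ to express it as $Q(l)$ times a rational function of $l$, and then checking positivity of that rational function for $0\le l\le I-1$ and across the junction at $l=I$; this keeps every step elementary at the price of a longer but routine case analysis.
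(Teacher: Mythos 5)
Your setup is sound and in places cleaner than the paper's: the unified coefficient $c_l=\frac{K-l}{l+1}\bigl(1-Q(l)\bigr)$ with $Q(l)=\prod_{j=0}^{l}\frac{I-j}{K-j}$ is correct, and your telescoping at $I=K-1$, giving $c_l=\frac{K-l}{K}$ affine in $l$, is exactly the right way to see the ``second derivative equal to $0$'' claim (the paper reaches the same anchor only after a long computation showing that its scaled second difference collapses to $2(K+1)K$). But from that point on your text is a plan, not a proof: the two inequalities that constitute the actual content of the lemma are never established. For the first derivative on the corrected range $l\le I-1$ you say ``the task is to show the first term dominates,'' to be done by digamma bounds that are not carried out; for the second derivative you yourself call the mixed-sign cancellation ``the genuine obstacle'' and then offer only two candidate strategies (a polygamma reduction ``to a single scalar inequality,'' or a fallback case analysis of the discrete second difference) without executing either. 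The positivity of the second difference for $I\in[K-2]$ is precisely the hard part -- the paper devotes most of its appendix to it, proving that the scaled second difference $\hat e_1^l(I)$ is decreasing in $I$ and vanishes identically at $I=K-1$, whence it is positive for all $I\le K-2$ -- so the lemma remains unproved in your write-up. There is also a technical flaw in the continuous route itself: the smooth (Gamma-function) interpolation of $Q$ needed for your digamma/polygamma derivatives vanishes only at \emph{integer} $l\ge I$, not on the whole interval, so the interpolant is not the plain MAN coefficient beyond $l=I$ and its convexity across the junction cannot be read off the way you suggest; the discrete second differences at $l=I-2$ and $l=I-1$, which straddle the junction, would still need separate verification.

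One further remark that would have closed half the gap cheaply: on $l\le I-1$ the coefficient $c_l=\bigl[\binom{K}{l+1}-\binom{I}{l+1}\bigr]/\binom{K}{l}$ is exactly the load $R(l)$ of \eqref{R_inact-obj}, so the negativity of the first difference there is literally Lemma~\ref{lemma-optFix-obj}, already proved via the ratio bound $R(l+1)/R(l)<1$; this is how the paper itself disposes of the first-derivative claim, and you could have invoked it instead of launching the digamma estimate. As it stands, your proposal contributes a clean reformulation and the degenerate case $I=K-1$, but the decreasingness on the corrected range and, above all, the strict convexity for $I\in[K-2]$ are missing.
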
 
	\begin{proof}
		See Appendix C. 
	\end{proof}
	
	\begin{lemma}\label{lemma-step2} 
		The optimal solution to problem \eqref{opt1} must have a tight cache capacity constraint \eqref{R1_2}. 
	\end{lemma}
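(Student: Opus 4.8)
The plan is to argue by contradiction using a mass-conserving exchange argument driven entirely by the monotonicity of the objective coefficients established in Lemma \ref{lemma-step1}. First I would record the only structural fact I need: writing the objective \eqref{R1_0} as $\sum_{l=0}^{K} c_l \beta^l$, Lemma \ref{lemma-step1} guarantees that the coefficient sequence $c_l$ is strictly decreasing in $l$, so $c_{l} > c_{l+1}$ for every $l \in [0:K-1]$. Note in particular that the index $l=K$ appears in neither sum of \eqref{R1_0}, so $c_K = 0$, which is consistent with (and the endpoint of) this strict decrease.

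Next I would suppose, toward a contradiction, that an optimal solution $\boldsymbol{\beta}^\star$ leaves the cache capacity constraint \eqref{R1_2} slack, i.e. $\sum_{l=1}^{K} l\,\beta^{\star l} < t$. Since the normalization \eqref{R1_1} forces $\sum_{l} \beta^{\star l} = 1$ and $t \le K$, the weighted index sum is strictly below its maximum possible value $K$; hence the entire mass cannot sit at $l = K$, and there must exist an index $l_0 < K$ with $\beta^{\star l_0} > 0$. I would then perturb by shifting a small amount $\epsilon > 0$ of weight from $l_0$ to $l_0+1$, setting $\tilde{\beta}^{l_0} = \beta^{\star l_0} - \epsilon$ and $\tilde{\beta}^{l_0+1} = \beta^{\star l_0+1} + \epsilon$ while leaving all other components unchanged.

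Checking feasibility is routine: for $\epsilon \le \beta^{\star l_0}$ the nonnegativity constraints \eqref{R1_3} are preserved; the normalization \eqref{R1_1} holds because the shift conserves total mass; and the weighted index sum in \eqref{R1_2} increases by exactly $\epsilon$, so choosing $\epsilon$ below the slack $t - \sum_{l} l\,\beta^{\star l}$ keeps \eqref{R1_2} satisfied. The objective then changes by $\epsilon\,(c_{l_0+1} - c_{l_0}) < 0$ by the strict monotonicity above, strictly lowering the cost and contradicting the optimality of $\boldsymbol{\beta}^\star$. Hence \eqref{R1_2} must hold with equality at any optimum.

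The one point that needs care---and where I expect the only genuine subtlety---is guaranteeing that a slack constraint always leaves an index $l_0 < K$ carrying positive mass to move. This is exactly where the bound $t \le K$ (equivalently the physical assumption $M \le N$) enters, ruling out the degenerate configuration in which all mass already sits at $l = K$ with the constraint slack. Everything else is a standard perturbation, since the essential work on the coefficients has already been carried out in Lemma \ref{lemma-step1}.
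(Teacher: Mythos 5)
Your proof is correct and takes essentially the same route as the paper's: a contradiction argument that shifts a small amount of mass toward a higher index, using the strict decrease of the objective coefficients from Lemma~\ref{lemma-step1} to produce a strictly better feasible point. If anything, your treatment is tidier---by shifting to the adjacent index $l_0+1$, noting $c_K=0$, and ruling out the all-mass-at-$l=K$ configuration via $t \le K$, you avoid the paper's separate handling of the single-nonzero-variable case, which it resolves by appealing to Theorem~\ref{theorem-inactive-t}.
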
 
	\begin{proof}
		We aim to prove the cache constraint \eqref{R1_2} must be satisfied with equality by the optimal solution $\boldsymbol{\beta}$ of problem~\eqref{opt1} via contradiction. If not, one can always derive a new feasible point, which satisfies all the constraints with a lower objective value, by assigning a larger value to $\beta^l$ with higher $l$. That is to say, the optimal caching strategy makes full usage of the cache space.
		
		We define a feasible solution as $\boldsymbol{\beta}=[\beta^0,\beta^1,\dots,\beta^K]$ guaranteeing a loose constraint \eqref{R1_2}, i.e. $t-\sum^K_{l=0} l \beta^l=\lambda>0$ with $\lambda >0$. We reduce the value of a nonzero $\beta^{l_1}$ by some value $\delta$, where $0 < \delta \le \beta^{l_1}$, then increase the value of any $\beta^{l_2}$ with $l_2>l_1$ by $\delta$. Since it holds true that $l_2>l_1$ and the coefficients in constraint \eqref{R1_2} increase linearly with $l$, we can always adjust the value of $\delta$, such that constraint \eqref{R1_2} is satisfied with a narrower gap $\lambda$. To be exact, we let $\delta$ satisfy the following constraints:  
		\begin{align}
			\left\{\begin{array}{cl}	
				0 < \delta \le \beta^{l_1},\\
				\beta^{l_2}+\delta \le 1,\\
				t-\lambda+l_1 (-\delta)+l_2 \delta \le t,
			\end{array} \right.  
		\end{align}
		which results in $0 < \delta=\min\{\beta^{l_1}, 1-\beta^{l_2}, \lambda/(l_2-l_1)\}$. The analysis is based on the assumption that there are at least two nonzero variables in $\boldsymbol{\beta}$, i.e. $0 <\beta^{l_1} <1$ and $0 <\beta^{l_2} <1$. In case that there is only one nonzero element in $\boldsymbol{\beta}$, i.e. $\beta^{l_1}=1$, we can recall the conclusion in Theorem~\ref{theorem-inactive-t} assuming fixed carnality that the optimal solution is $\boldsymbol{\beta}=\{\beta^t=1, \beta^l=0, \text{when}~l \neq t\}$ with the cache capacity constraint satisfied with equality. Hence, we only need to discuss on the general case when there are at least two nonzero elements in $\boldsymbol{\beta}$.
		
		In this case, the summation of the elements in $\boldsymbol{\beta}$ remains the same so that constraint \eqref{R1_1} still holds true. The new feasible point becomes $\boldsymbol{\hat{\beta}}=[\hat{\beta}^0,\hat{\beta}^1,\dots,\hat{\beta}^K]$ with
		\begin{align}
			\left\{\!\begin{array}{cl}
				&\hat{\beta}^{l_1}=\beta^{l_1}-\delta, \\ &\hat{\beta}^{l_2}=\beta^{l_2}+\delta, \\ &\hat{\beta}^l=\beta^l, \mbox{else}.
			\end{array} \right.
		\end{align} 	 
		Since only two variables are changed, the difference of the objective function defined as $\Delta$ can be written as 
		\begin{align}
			\Delta=&(c^{l_1} \hat{\beta}^{l_1}+c^{l_2} \hat{\beta}^{l_2})-(c^{l_1} \beta^{l_1}+c^{l_2} \beta^{l_2}) \notag \\
			=&c^{l_2} \delta-c^{l_1} \delta=(c^{l_2}-c^{l_1})\delta \overset{(a)}{<} 0, \label{tightC}
		\end{align} 
		where $(a)$ is derived utilizing the conclusion in previous step that coefficient $c^{l}$ is decreasing with respect to $l$. As is presented in \eqref{tightC},  the new feasible point $\boldsymbol{\hat{\beta}}$ gives a lower objective value than $\boldsymbol{\beta}$. That is to say there always exists a better solution $\boldsymbol{\hat{\beta}}$. It is consequently proved that the optimal solution to problem \eqref{opt1} always satisfies the cache capacity constraint \eqref{R1_2} with equality. 
	\end{proof}

	\begin{lemma}\label{lemma-step3} 
		The optimal solution has at most one non-zero variables of $\boldsymbol{\beta}$ (two for non-integer $t$). 
	\end{lemma}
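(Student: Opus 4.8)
The plan is to read \eqref{opt1} as a linear program on the simplex and to combine its vertex structure with the convexity of the objective coefficients from Lemma~\ref{lemma-step1}. By Lemma~\ref{lemma-step2} an optimal $\boldsymbol\beta$ saturates the cache constraint \eqref{R1_2}, so it satisfies the two linear equalities $\sum_{l=0}^{K}\beta^l=1$ and $\sum_{l=0}^{K} l\,\beta^l=t$ together with $\beta^l\ge 0$. First I would invoke the standard fact that the optimum of a linear program is attained at a vertex of its feasible polytope, and that a vertex of a set cut out by two equalities intersected with the nonnegative orthant has at most two nonzero coordinates. Equivalently, if three or more entries were positive, the two equalities would leave a nonzero direction $\boldsymbol v$ with $\sum_l v^l=0$ and $\sum_l l\,v^l=0$; moving along $+\boldsymbol v$ or $-\boldsymbol v$ preserves both equalities, and since the objective is linear one of these two directions does not increase it, so one may slide until a further coordinate vanishes, strictly shrinking the support without raising the objective. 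As the feasible set is bounded (a subset of the simplex), iterating yields an optimal solution with at most two nonzero entries.

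Next I would treat the two-nonzero case. If the support is $\{l_1,l_2\}$ with $l_1<l_2$, the two equalities force $\beta^{l_1}=\tfrac{l_2-t}{l_2-l_1}$ and $\beta^{l_2}=\tfrac{t-l_1}{l_2-l_1}$, which are nonnegative exactly when $l_1\le t\le l_2$, and the objective equals the value at $t$ of the chord joining $(l_1,c^{l_1})$ and $(l_2,c^{l_2})$, where $c^l$ denotes the coefficient of $\beta^l$ in \eqref{R1_0}. Here the convexity of $c^l$ established in Lemma~\ref{lemma-step1} is decisive: for a convex coefficient sequence, the chord over a wider bracketing interval lies above the chord over the adjacent integers enclosing $t$. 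Concretely, whenever $l_2-l_1\ge 2$ there is an integer $l'$ with $l_1<l'<l_2$ and $c^{l'}\le \tfrac{l_2-l'}{l_2-l_1}c^{l_1}+\tfrac{l'-l_1}{l_2-l_1}c^{l_2}$, so redistributing mass toward $l'$ keeps both equalities and does not increase the objective (strictly decreasing it under the strict convexity that holds for $I\in[K-2]$). Hence an optimal two-point solution may be taken on the consecutive integers $l_1=\lfloor t\rfloor$ and $l_2=\lceil t\rceil$.

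Finally I would split on whether $t$ is an integer. If $t$ is an integer the two adjacent bracketing integers coincide with $t$, the chord degenerates to the single point $\beta^{t}=1$, and this feasible solution attains the value $c^{t}$; by the convexity step every genuine two-point solution has objective at least $c^{t}$, so an optimal solution with a single nonzero entry exists. If $t$ is not an integer no single $\beta^l$ can satisfy both equalities, since that would require the noninteger index $l=t$; at least two entries are then nonzero, and the solution supported on $\{\lfloor t\rfloor,\lceil t\rceil\}$ realizes the bound, giving exactly two. The step I expect to be delicate is the strict-versus-weak convexity distinction: the mass-shifting only strictly improves the objective when $c^l$ is strictly convex, i.e. for $I\in[K-2]$, whereas for $I=K-1$ Lemma~\ref{lemma-step1} gives an affine coefficient sequence on which the objective is constant over the feasible set. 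In that degenerate case the single-point solution $\beta^{t}=1$ is still optimal though no longer uniquely so, which is precisely the exceptional situation isolated in the next step.
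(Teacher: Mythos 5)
Your proof is correct and reaches the same endpoint as the paper, including the explicit weights $\beta^{\lfloor t\rfloor}=\lceil t\rceil -t$, $\beta^{\lceil t\rceil}=t-\lfloor t\rfloor$ and the degenerate case $I=K-1$, but it gets there by a genuinely different decomposition. The paper works by contradiction with a single bespoke four-coordinate perturbation: if an optimum had nonzero $\beta^{l_1},\beta^{l_2}$ with $l_2-l_1\ge 2$, shifting mass $\delta$ from $l_1$ to $l_1+1$ and from $l_2$ to $l_2-1$ preserves both equalities and changes the objective by $\delta\,(d^{l_1}-d^{l_2-1})<0$, the strict inequality coming from the increasing first differences established in Lemma~\ref{lemma-step1}; both the support-size bound and consecutiveness drop out of this one convexity step, after which the two equalities pin down the weights exactly as in your last paragraph. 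You instead split the work in two: first a pure linear-programming step (basic feasible solutions of a bounded polytope cut out by two equalities and nonnegativity have at most two nonzero coordinates, equivalently your null-space sliding argument, which uses no convexity at all), then a chord/Jensen comparison in which the convexity from Lemma~\ref{lemma-step1} enters exactly once, to show that among two-point supports bracketing $t$ the consecutive pair $\{\lfloor t\rfloor,\lceil t\rceil\}$ is best. Your route buys modularity and a clean separation of causes: the support bound is generic LP theory, and convexity only selects where the support sits. The paper's route buys a slightly stronger conclusion at no extra cost: since its perturbation strictly improves any solution with a gapped support, \emph{every} optimum (not merely some optimum) has consecutive support when $I\in[K-2]$, whereas your vertex/sliding step by itself only yields existence, and you recover strictness only through your parenthetical strict-convexity remark (note that ruling out optima with supports of size three or more likewise requires the strict form of the Jensen comparison, not the weak sliding inequality — a point worth making explicit, though existence is all that Theorem~\ref{theorem-opt} actually needs). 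Both treatments handle $I=K-1$ identically, observing that the coefficients become affine and the objective is constant on the feasible set, so the claimed solution remains optimal but is no longer unique.
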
 
	\begin{proof}
		Here contradiction will again be employed to prove that the optimal solution $\boldsymbol{\beta}$ must have either one non-zero variable ($t \in \mathbb{Z}$), or two non-zero variables with consecutive indices ($t \not\in \mathbb{Z}$), i.e. some $l$ and $l+1$. 
		
		From the contradictory perspective, we assume a feasible solution $\boldsymbol{\beta}$ to problem \eqref{opt1}, which has at least two non-zero variables defined as $\beta^{l_1}$ and $\beta^{l_2}$ with $l_2-l_1 \ge 2$, is optimal, and then prove that there exists a better solution $\boldsymbol{\beta_o}$. Similar to the construction of $\boldsymbol{\hat{\beta}}$ in previous step, we reduce a small positive value $\delta$ from one $\beta^l$ and add it to another $\beta^l$ to keep the sum unchanged. Thus $\boldsymbol{\beta_o}=[\beta^0_o, \dots \beta^{K}_o]$ is defined as 
		\begin{align}
			\left\{\!\!\begin{array}{cl}	
				&\beta^{l_1}_o=\beta^{l_1}-\delta,\\
				&\beta^{l_1+1}_o=\beta^{l_1+1}+\delta,\\
				&\beta^{l_2-1}_o=\beta^{l_2-1}+\delta, \\
				&\beta^{l_2}_o=\beta^{l_2}-\delta, \\
				&\beta^l_o=\beta^l,~\mbox{else}.
			\end{array} \right.  
		\end{align}
		Specially, we set $\beta^{l_2-1}_o=\beta^{l_2-1}+2\delta$ when $l_2-l_1=2$ instead. The cache capacity constraint is satisfied with equality because 
		\begin{equation*}  
			\sum_l l \beta^l_o=\sum_l l \beta^l-l_{1}\delta+(l_{1}+1)\delta+(l_{2}-1)\delta-l_{2}\delta=t,
		\end{equation*}
		where $\sum_l l \beta^l=t$ proved in previous step. 
		
		To proceed, one needs to derive the change of objective value with $\boldsymbol{\beta_o}$ from the one with $\boldsymbol{\beta}$. The aim is to prove that $\boldsymbol{\beta_o}$ provides an improved objective value. As mentioned in the first step, the coefficients of objective function have positive second derivative. Hence, we obtain $d^{l_2}>d^{l_2-1}>d^{l_1+1}>d^{l_1}$ on condition that $l_2>l_2-1>l_1+1>l_1$. Again, we use $\Delta$ here to denote the change in objective function given by
		\begin{align} 
			\Delta&=-c^{l_1}\delta+c^{l_1+1}\delta+c^{l_2-1}\delta-c^{l_2}\delta \\
			&= \delta \left [{(c^{l_{1}+1}-c^{l_{1}}) - (c^{l_{2}} - c^{l_{2}-1}) }\right]  
			= \delta (d^{l_1}- d^{l_2-1}) < 0.
		\end{align}
		That ends the proof of a better feasible solution $\boldsymbol{\beta_o}$ than $\boldsymbol{\beta}$, which implies that the optimal solution has either only one non-zero variable, or two non-zero variables with consecutive indices. This remark largely simplifies the objective function and the constraints by reducing the number of variables to be optimized from $K+1$ to some around two. Recalling the conclusion in previous step, problem~\eqref{opt1} becomes
		\begin{subequations}\label{opt3} 
			\begin{align}  
				\min_{\{l,\beta^l, \beta^{l+1}\}} ~~~~& c^l\beta^l+c^{l+1}\beta^{l+1} \label{R3_0} \\
				{\rm s.t.}  ~~~~& \beta^l + \beta^{l+1} = 1, \label{R3_1}  \\
				~~~~& l \beta^l + (l+1) \beta^{l+1}=t, \label{R3_2} \\
				~~~~& 0 \le \beta^l \le \beta^{l+1} \le 1, \label{R3_3} \\
				~~~~& l \in [0:K-1]. \label{R3_4}
			\end{align}
		\end{subequations} 
		Problem \eqref{opt3} has three variables to be optimized with about four constraints two of which are equations. Jointly considering the constraints \eqref{R3_1}, \eqref{R3_3}, and \eqref{R3_4}, which imply non negative integers $l$ and $l+1$ as well as the proper fractions $\beta^l$ and $\beta^{l+1}$, constraint \eqref{R3_2} can thus be interpreted into 
		\begin{align}
			&l\beta^l + (l+1) \beta^{l+1}=t, \label{tandl0}\\
			&l \le l \beta^l + (l+1) \beta^{l+1} \le l+1, \label{tandl1}\\
			&l \le t \le l+1. \label{tandl2} 
		\end{align}
		According to \eqref{tandl2}, $t$ must falls within the interval spanned by $l$ and $l+1$. If $t \in \mathbb{Z}$ which is a common assumption for coded caching, one can easily derive that $t=l$ or $t=l+1$. However, no matter in which of the two cases mentioned, the optimal solution to problem \eqref{opt1} has only one non-zero variable at the index $l=t$ equivalently. For instance, when $t=l+1$, it follows that $\beta^{l+1}=\beta^{t}=1$. We then summarize the optimal solution as $\beta^t=1$, and $\beta^l=0, \text{when}~l \ne t$, which agrees with \eqref{sol2} in Theorem~\ref{theorem-opt}. If $t \notin \mathbb{Z}$, the relationship among $\{l, t, l+1\}$ is derived as $l=\floor*{t}$ and $l+1=\ceil*{t}$ which is the unique solution. Letting $\beta^{l}=\eta$, we obtain from \eqref{tandl0}-\eqref{tandl2} 
		\begin{align}
			t&=\floor*{t} \eta+\ceil*{t} (1-\eta) \notag  \\
			&=(\floor*{t}-\ceil*{t}) \eta +\ceil*{t}=-\eta +\ceil*{t}.  
		\end{align} 
		Hence, it holds true that $\eta=\ceil*{t}-t$. The optimal solution to problem \eqref{opt1} when $t \notin \mathbb{Z}$ is given by $\beta^{\floor*{t}}=\eta,\beta^{\ceil*{t}}=1-\eta$ with $\eta=\ceil*{t}-t$, and $\beta^l=0, l \in [0:K] \setminus \{\floor*{t}, \ceil*{t}\}$. That ends the proof of the optimal solution to the coded caching optimization problem \eqref{opt1}.
	    \end{proof}
       
    \begin{lemma}\label{lemma-step4}
    There is an exceptional case $I=K-1$ when the second derivative equals $0$. It will destroy the proof of Lemma~\ref{lemma-step3} which strictly requires a positive second derivative. The optimal solution, in this case, is no longer unique, but the solution in the general case still works. The same solution is claimed as the optimal one when $I=K-1$ for consistency.  
    \end{lemma}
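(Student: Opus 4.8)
The plan is to exploit the degeneracy at $I=K-1$ rather than to work around it. The zero second difference established in Lemma~\ref{lemma-step1} is precisely what invalidates the strict-improvement perturbation of Lemma~\ref{lemma-step3}; but the same coincidence will be shown to flatten the objective into an affine functional of $\sum_l l\beta^l$, so that the entire face cut out by the active constraints is optimal and, in particular, contains the general-case solution.

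First I would evaluate the objective coefficients $c^l$ of \eqref{R1_0} at $I=K-1$. Using the identity ${K-1 \choose l+1}/{K \choose l}=(K-l)(K-l-1)/(K(l+1))$, the coefficient $c^l=\frac{K-l}{l+1}-{I \choose l+1}/{K \choose l}$ for $l\in[0:I-1]$ collapses to $c^l=(K-l)/K$; a direct check confirms that this same expression also reproduces $c^{K-1}=1/K$ and $c^K=0$. Hence $c^l=(K-l)/K$ for all $l\in[0:K]$, which is affine in $l$ and reconfirms the vanishing second difference.

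Next I would substitute these coefficients into the objective and use the normalization constraint \eqref{R1_1}. Because $\sum_{l=0}^{K}\frac{K-l}{K}\beta^l=\frac{1}{K}\big(K\sum_l\beta^l-\sum_l l\beta^l\big)=1-\frac{1}{K}\sum_l l\beta^l$, minimizing the objective is equivalent to maximizing $\sum_l l\beta^l$ subject to the cache constraint \eqref{R1_2}, namely $\sum_l l\beta^l\le t$. The objective is therefore minimized exactly when the cache constraint is tight, with optimal value $1-t/K$; this is consistent with Lemma~\ref{lemma-step2}, whose proof uses only that $c^l$ is strictly decreasing, which still holds here.

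Finally I would note that every feasible point satisfying $\sum_l\beta^l=1$, $\beta^l\ge0$ and $\sum_l l\beta^l=t$ attains the value $1-t/K$ and is thus optimal, which is precisely the loss of uniqueness claimed. The general-case solution of Lemma~\ref{lemma-step3}---$\beta^t=1$ for integer $t$, or $\beta^{\floor{t}}=\ceil{t}-t$ and $\beta^{\ceil{t}}=t-\floor{t}$ otherwise---satisfies $\sum_l l\beta^l=t$ by construction, hence lies on this optimal face and may be retained for consistency. The only real obstacle is conceptual: one must see that the algebraic coincidence $I=K-1$ that destroys the convexity underpinning Lemma~\ref{lemma-step3} simultaneously renders the objective constant on the face defined by the tight cache and normalization constraints, after which no further estimation is required.
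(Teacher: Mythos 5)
Your proposal is correct and follows essentially the same route as the paper: both evaluate the coefficients at $I=K-1$ to get $c^l=(K-l)/K$, use the normalization and cache constraints to collapse the objective to the constant $1-t/K$ on the face where $\sum_l l\beta^l=t$, and conclude that optimality is non-unique but the general-case solution of Lemma~\ref{lemma-step3} lies on this optimal face. The only cosmetic difference is ordering---the paper invokes Lemma~\ref{lemma-step2} to impose tightness before substituting, whereas you rewrite the objective as $1-\tfrac{1}{K}\sum_l l\beta^l$ first and deduce tightness from the minimization itself---which does not change the substance of the argument.
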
   
      
    \begin{proof}
		As mentioned previously, when $I=K-1$, the second derivative of the coefficients in the objective equals to $0$ which affects the proof of the structure of the optimal solution in Lemma~\ref{lemma-step3}. To this end, we firstly derive the objective function when $I=K-1$ as
		\begin{equation*} 
		\sum_{l=0}^{K-1} \frac{K-l}{l+1} \beta^l-\sum_{l=0}^{K-2} {K-1 \choose l+1}/{K \choose l}\beta^l=\sum_{l=0}^{K-1} \frac{(K-l)}{K} \beta^l. 
		\end{equation*}
	 Based on Lemma~\ref{lemma-step2}, the constraints \eqref{R1_1}-\eqref{R1_3} simplifies into 
	 \begin{align}\label{step4-cons}
	 \left\{\begin{array}{cl}
	 \sum_{l=0}^K \beta^l=1, &   
	  \sum_{l=1}^K l \beta^l = t,  \\
	 0 \le \beta^l,&l \in [0:K].  
	 \end{array}\right. 
     \end{align}

	Substituting \eqref{step4-cons} into the objective function, it becomes 
		\begin{equation}
		    \label{objK-1}
			\sum_{l=0}^{K-1} \frac{(K-l)}{K} \beta^l=1-\frac{t}{K}.
	\end{equation}
		Apparently, all feasible objective value is constant in this case. That is to say, the optimal objective value is $1-t/K$, and the optimal solution can be any $\boldsymbol{\beta}$ satisfying \eqref{step4-cons}. Without loss of generality, we state that the optimal solution when $I=K-1$ is the same as the one given in \eqref{sol2} and \eqref{sol3}. \end{proof}

	\begin{theorem}\label{theorem-opt} 
		The optimal solution to problem \eqref{opt1} is the same as the optimal solution to the file subpacketization optimization with fixed cardinality, which is given by  
		\begin{align} \label{sol2}
			\beta^l=\left\{\begin{array}{cl}
				1, &\mbox{if}~~ l=t, \\
				0, & \mbox{else},
			\end{array}\right. 
		\end{align}
		assuming $t=KM/N$ is integer. When $t=KM/N$ is non-integer, there are two adjacent nonzero elements in $\{\beta^l\}$ around $t$. Letting $\eta=\ceil*{t}-t$, the solution becomes  
		\begin{align} \label{sol3}
			 \beta^l=\left\{\begin{array}{cl}
				\eta, &\mbox{if}~~ l=\floor*{t}, \\
				1-\eta, &\mbox{if}~~ l=\ceil*{t},\\
				0, & \mbox{else}.
			\end{array}\right. 
		\end{align} 
	\end{theorem}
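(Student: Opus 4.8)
The plan is to assemble Theorem~\ref{theorem-opt} directly from Lemmas~\ref{lemma-step1}--\ref{lemma-step4}, treating it essentially as a corollary that collects the structural facts already established rather than as a fresh argument. I would first dispose of the generic regime $I\in[K-2]$, where Lemma~\ref{lemma-step1} guarantees that the objective coefficients $c^l$ in \eqref{R1_0} have negative first derivative (so they are strictly decreasing in $l$) and positive second derivative (so they are strictly convex in $l$). These are precisely the hypotheses consumed by the next two lemmas: the decreasing property lets Lemma~\ref{lemma-step2} force the cache capacity constraint \eqref{R1_2} to hold with equality at any optimum, and the convexity lets Lemma~\ref{lemma-step3} restrict the support of an optimal $\boldsymbol{\beta}$ to at most a single index (when $t\in\mathbb{Z}$) or two consecutive indices (when $t\notin\mathbb{Z}$).

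With the support so reduced, problem \eqref{opt1} collapses to the small problem \eqref{opt3}, in which the two equality constraints \eqref{R3_1} and \eqref{R3_2} leave no freedom once the index $l$ is fixed. I would then solve this linear system explicitly. When $t$ is integer, \eqref{tandl2} forces $l=t$, whence $\beta^t=1$, giving \eqref{sol2}; when $t$ is non-integer, \eqref{tandl2} forces $l=\floor*{t}$ and $l+1=\ceil*{t}$, and substituting into \eqref{tandl0} pins the weights to $\beta^{\floor*{t}}=\eta$ and $\beta^{\ceil*{t}}=1-\eta$ with $\eta=\ceil*{t}-t$, giving \eqref{sol3}. This coincides with the fixed-cardinality optimum $l=KM/N=t$ of Theorem~\ref{theorem-inactive-t}, which in the $\boldsymbol{\beta}$-parametrization is exactly a unit mass at index $t$, so the claimed equivalence holds throughout $I\in[K-2]$.

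The remaining endpoint $I=K-1$ is the degenerate case in which the second derivative in Lemma~\ref{lemma-step1} vanishes, so the convexity-driven support reduction of Lemma~\ref{lemma-step3} no longer applies. Here I would simply invoke Lemma~\ref{lemma-step4}: substituting the tight constraints \eqref{step4-cons} into the objective shows it equals the constant $1-t/K$ on the entire feasible set, so every feasible point---including the candidate \eqref{sol2}/\eqref{sol3}---is optimal, and the general-case solution may be declared optimal for consistency. The step I expect to demand the most care is this bookkeeping: covering the two regimes of $I$ crossed with the integer/non-integer dichotomy of $t$, and confirming that the reduced linear system has exactly the unique solution claimed. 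The genuine analytic obstacle, namely the convexity argument that collapses the support, has already been absorbed into Lemma~\ref{lemma-step3}, so the theorem itself is a synthesis whose payoff is the observation that none of the resulting optimal solutions depends on $I$.
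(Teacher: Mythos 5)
Your proposal is correct and takes essentially the same route as the paper: the paper's own proof of Theorem~\ref{theorem-opt} is exactly the synthesis of Lemmas~\ref{lemma-step1}--\ref{lemma-step4}, with the explicit solution of the reduced problem \eqref{opt3} (the integer versus non-integer $t$ dichotomy yielding \eqref{sol2} and \eqref{sol3}) already carried out inside the proof of Lemma~\ref{lemma-step3}, and the degenerate case $I=K-1$ dispatched by Lemma~\ref{lemma-step4}. The only difference is organizational---you re-derive the collapse of \eqref{opt3} within the theorem's proof rather than citing it from Lemma~\ref{lemma-step3}---which does not change the argument.
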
 
	
	\begin{proof}
	This follows directly from following the four steps given in Lemma~\ref{lemma-step1}-Lemma~\ref{lemma-step4}.
   \end{proof}	
		
	   \begin{corollary}  
       Based on Theorem \ref{theorem-inactive-t} and Theorem \ref{theorem-opt},  Alg.~\ref{alg_cen} is optimal for integer $KM/N$. 
	   \end{corollary}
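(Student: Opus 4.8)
The plan is to show that Alg.~\ref{alg_cen} exactly realizes the optimizer of the most general placement optimization \eqref{opt1}, so that its worst-case backhaul load coincides with the minimum over the whole admissible class. First I would identify Alg.~\ref{alg_cen} as the fixed-cardinality scheme: its placement procedure sets $l \leftarrow l^*=KM/N$ and labels fragments by $l$-subsets, while its delivery forms coded packets over $(l+1)$-subsets according to \eqref{Xd_inact}--\eqref{Xds_inact}. Consequently, the worst-case backhaul load achieved by Alg.~\ref{alg_cen} is precisely $R(l^*)=R(KM/N)$ as given by \eqref{R_inact-obj}.

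Next I would invoke the generality of \eqref{opt1}. The multiple-cardinalities formulation subsumes every fixed-cardinality scheme as a feasible vertex (one nonzero $\beta^l$), so its minimum is a genuine lower bound on the backhaul load over the class of placements considered. For integer $t=KM/N$, Theorem~\ref{theorem-opt} identifies the optimizer as $\beta^t=1$ and $\beta^l=0$ otherwise; translating back through $\beta^l={K \choose l}\alpha^l$ recovers exactly the fixed-cardinality placement of cardinality $t$, i.e. the placement used by Alg.~\ref{alg_cen}. Substituting $\beta^t=1$ into the objective \eqref{R1_0} reproduces \eqref{R_inact-obj} evaluated at $l=t$, so the minimum of \eqref{opt1} equals the backhaul load of Alg.~\ref{alg_cen}.

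Combining these two facts completes the argument: Alg.~\ref{alg_cen} attains the minimum of \eqref{opt1}, and no admissible multiple-cardinalities scheme can do better, hence Alg.~\ref{alg_cen} is optimal for integer $KM/N$. Theorem~\ref{theorem-inactive-t} plays a confirming role, ensuring that $l=KM/N$ is already the best cardinality within the fixed-cardinality subclass, consistent with the broader optimality established through Theorem~\ref{theorem-opt}. The restriction to integer $KM/N$ is essential: for non-integer $t$ the optimizer \eqref{sol3} spreads mass over the two cardinalities $\floor*{t}$ and $\ceil*{t}$, which the single-cardinality Alg.~\ref{alg_cen} does not implement.

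The main obstacle, such as it is, lies not in any computation but in justifying that \eqref{opt1} genuinely captures the full space of admissible placements, so that optimality for \eqref{opt1} upgrades to unconditional optimality of Alg.~\ref{alg_cen}; the remaining bookkeeping---verifying that the delivery in Alg.~\ref{alg_cen} realizes the objective value of the optimizer of \eqref{opt1}---is routine given \eqref{R_inact-obj} and Theorem~\ref{theorem-opt}.
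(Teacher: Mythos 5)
Your proposal is correct and follows essentially the same route as the paper, which states the corollary as an immediate consequence of Theorem~\ref{theorem-inactive-t} and Theorem~\ref{theorem-opt}: Alg.~\ref{alg_cen} realizes the fixed-cardinality placement $l^{*}=KM/N$, and Theorem~\ref{theorem-opt} shows that for integer $t$ the optimizer of the general multiple-cardinality problem \eqref{opt1} is exactly $\beta^{t}=1$, so the algorithm attains the minimum of \eqref{opt1}. Your added caveats---that optimality is relative to the class of placements captured by \eqref{opt1}, and that the integer restriction is essential since \eqref{sol3} splits mass over $\floor*{t}$ and $\ceil*{t}$---are accurate and consistent with the paper's scope.
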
	


	\section{Decentralized coded caching in presence of inactive users}\label{method4} 
	In practical wireless networks, one can rarely expect the server has totally central coordinating among the users, which has motivated the exploration on using a decentralized pattern for content placement, at the same time enjoying the global caching gain of coded multicasting in delivery. 
	
	Decentralized coded caching utilizes random content placement without coordination among users, so that it is easy to implement with acceptable performance degradation over the centralized scheme \cite{MaddahAli2014} ignoring user inactivity. Decentralized cache placement is applicable in our settings, where there is uncertainty from user inactivity, i.e. the number and identity of inactive users are unknown in the content placement phase. Compared to the centralized placement discussed previously, here each user fills local storage independently and randomly, so that the user inactivity information is unnecessary.  
	
	Decentralized coded caching proceeds in two phases: random content placement phase and content delivery phase. In placement phase, each user fills its cache with $\frac{MF}{N}$ bits of each file $n$ independently at random. Hence, no file splitting is needed. Here, we utilize notation ${\cal S} \subset [K]$ to denote a set of users with cardinality $|{\cal S}|=s$. $V_{k, {\cal S}}$ denotes the bits of the file requested by user $k$ (i.e. $d_k$) if the bits are cached exclusively by users in ${\cal S}$. That is to say, the bits in $V_{k, {\cal S}}$ are cached only by every user in ${\cal S}$, so that they are missing in users outside ${\cal S}$. Each bit is chosen at random uniformly with the probability denoted by $q \triangleq M/N \in (0,1)$. The probability for a bit is cached at any given user set ${\cal S}$ with cardinality $s$ becomes $Q^s=q^s(1-q)^{K-s}$. The number of bits being cached at the users in the given ${\cal S}$ is $F Q^s=F q^s(1-q)^{K-s}$. Accordingly, for any user $k \in {\cal S}$, the expected number of missing bits of file $d_k$ related to a given ${\cal S}$, i.e. $V_{k, {\cal S} \setminus \{k\} }$ is $F Q^{s-1}=F q^{s-1} (1-q)^{K-s+1}$. The benefit of focusing on set ${\cal S} \setminus \{k\}$ is that it enables us to use $V_{k, {\cal S} \setminus \{k\}}$ to identify the required bits for user $k$ corresponding to ${\cal S}$, and at the same time use $V_{j, {\cal S} \setminus \{j\}},~j \in {\cal S}, j \ne k$ to describe the cached bits at user $k$. For any possible set ${\cal S}$ with cardinality $s$, the server needs to send the maximum missing bits for the users in ${\cal S}$ given by
	\begin{equation}
		\max_{k \in {\cal S}} |V_{k, {\cal S} \setminus \{k\} }|=F q^{s-1} (1-q)^{K-s+1}.
	\end{equation}
	In total, there are ${K \choose s}$ types of distinct sets ${\cal S}$ with given cardinality $s$. In delivery, one should sum up all the distinct  ${\cal S}$ with given cardinality $s$ in a loop, and then let cardinality $s$ vary from $1$ to $K$. 
	
	Without user inactivity, the transmitted packet with user set ${\cal S}$ is $\oplus_{k \in {\cal S}}~V_{k, {\cal S} \setminus \{k\}}$. The backhaul load is obtained as 
	\begin{equation}\label{Rdecen}
		R=\sum_{s=1}^{K} {K \choose s} q^{s-1} (1-q)^{K-s+1} 
		=\frac{1-q}{q}\big(1\!-\!(1-q)^K\big).
	\end{equation}
	
	Considering user inactivity, the server determines the bits to be transmitted utilizing the information of user inactivity. Similar to 
	(\ref{R_inact-obj}), only the active users require to be served. Consequently, there are two types of ${\cal S}$ in terms of the cardinality $s$. First, when the cardinality $s \in [I+1:K]$ which guarantees at least one active user in ${\cal S}$, the transmitted bits with regarding to ${\cal S}$ are $\oplus_{k \in {\cal S}, k \notin {\cal I}}~V_{k, {\cal S} \setminus \{k\}}$. In contrast, when $s \in [1:I]$, it may happens that all the users in ${\cal S}$ are inactive so that there is no need to consider set ${\cal S}$. The transmitted bits can be written as $\oplus_{k \in {\cal S}, k \notin {\cal I}}~V_{k, {\cal S} \setminus \{k\}, {\cal S} \not\subset {\cal I} }$. The procedure of the decentralized coded caching in presence of user inactivity is given in Alg.~\ref{alg_dec} with delivery policy I. 
	
	\begin{algorithm} 
		\caption{Decentralized Coded Caching in Presence of User Inactivity}
		\label{alg_dec}  
		\begin{algorithmic}[1]
			\STATE \textbf{procedure} {\large P}LACEMENT
			\STATE ~~~~\textbf{for} $k \in [K], n \in [N]$ do\\
			\STATE~~~~~~~user $k$ independently caches $\frac{MF}{N}$ bits of file $n$, \\
			~~~~~~~chosen uniformly at random\\
			\STATE~~~~\textbf{end for}
			\STATE \textbf{end procedure} \\
			Users make requests $\boldsymbol{d}$ given the number and identity of the inactive users $(I, {\cal I})$; \\
			The number of active users $J=K-I$ 
			\STATE \textbf{procedure} {\large D}ELIVERY I
			\STATE ~~~~\textbf{for} $s=K, K-1,\dots, I+1$ do\\
			\STATE~~~~~~~\textbf{for} ${\cal S} \subset [K]$ with $|{\cal S}|=s$ do\\
			\STATE~~~~~~~~~~~server sends $\oplus_{k \in {\cal S}, k \notin {\cal I}}~V_{k, {\cal S} \setminus \{k\}}$
			\STATE~~~~~~~\textbf{end for}
			\STATE ~~~~\textbf{end for}
			\STATE ~~~~\textbf{for} $s=I, I-1,\dots, 1$ do\\
			\STATE~~~~~~~\textbf{for} ${\cal S} \subset [K]$ with $|{\cal S}|=s$ do\\
			\STATE~~~~~~~~~~~server sends $\oplus_{k \in {\cal S}, k \notin {\cal I}}~V_{k, {\cal S} \setminus \{k\}, {\cal S} \not\subset {\cal I}}$		
			\STATE~~~~~~~\textbf{end for}
			\STATE ~~~~\textbf{end for}
			\STATE \textbf{end procedure}
			\STATE \textbf{procedure} {\large D}ELIVERY II
			\STATE ~~~~\textbf{for} $s=J, J-1,\dots, 1$ do\\
			\STATE~~~~~~~\textbf{for} ${\cal S} \subset [K]\setminus {\cal I}$ with $|{\cal S}|=s$ do\\
			\STATE~~~~~~~~~~~server sends $\oplus_{k \in {\cal S}}~V_{k, {\cal S} \setminus \{k\}}$
			\STATE~~~~~~~\textbf{end for}
			\STATE ~~~~\textbf{end for}
			\STATE \textbf{end procedure}
		\end{algorithmic}
	\end{algorithm}
	
	The backhaul load considering user inactivity becomes
	\begin{align}\label{RdecenInact}
		R&=\left[\sum_{s=I+1}^{K} {K \choose s} +\sum_{s=1}^{I} \left[{K \choose s}-{I \choose s}\right] \right]  
		q^{s-1} (1-q)^{K-s+1}  \notag \\
		&=\frac{1-q}{q}\big(1-(1-q)^{K-I}\big). 
	\end{align}
	The connection between the backhaul loads with user inactivity $(I, \cal{I})$ and without user inactivity can be derived as 
	\begin{equation}\label{RdecenConnect}
		R(I, K)=R(0,K-I)<R(0,K). 
	\end{equation}
	Here $R(a, b)$ is the load with $a$ inactive users and $b$ total users.  
	
	Equation \eqref{RdecenConnect} implies that the scenario with $I$ inactive users out of in total $K$ users equals the case when there are original $J=K-I$ users and all are active. The remark agrees with the key idea of decentralized caching where the content placement operates at each user separately without coordination. Consequently, a more direct way is to replace $(K, [K])$ with the number and identities of the active users $(J, [K] \setminus {\cal I})$ when defining user set ${\cal S}$ in delivery. In this case, there is no need to worry about inactive users. The alternative delivery policy is given in Delivery II in Alg.~\ref{alg_dec}. 
	
	\section{Comparison of centralized and decentralized cache placement against user inactivity}
	\subsection{Centralized versus Decentralized Coded Caching}
	Here we briefly compare the backhaul loads derived using decentralized method in Alg.~\ref{alg_dec} and the centralized method in Alg.~\ref{alg_cen}, in terms of the number of inactive users $I$. 
	
	First, it can be concluded from the key ideas and precise backhaul loads of the centralized  \eqref{R_inact-obj} and decentralized \eqref{RdecenInact} versions that:
	\begin{itemize}
		\item The backhaul loads in both manners decrease when the number of inactive users $I$ rises. 
		\item When $I \to 0$, the backhaul loads for both manners approximate the backhaul loads without user inactivity.
		\item On the contrary, when $I \to K$, there are few users staying active, so that both of the backhaul loads approach $0$.
		\item The gain of centralized manner over decentralized manner results from the coordination among active users which increases the opportunities for coded multicasting, so that it decreases with regarding to $I$. Particularly when there is only one active user, the two manners become the same. 
	\end{itemize}
	
	While the qualitative discussion is presented above, the gain of centralized coded caching over decentralized manner against user inactivity defined as $G=R_d-R_c$ is investigated quantitatively. Here index $c$ and $d$ denote centralized, and decentralized manners respectively. We obtain  
	\begin{align*} 
		G(I)=\left\{\begin{array}{cl}
		 \frac{N-M}{M}\big(1-(1-\frac{M}{N})^{K-I}\big)-\frac{K(1-M/N)}{1+KM/N}, &  \mbox{if}~t+1!> I,\\
	 	 \frac{N-M}{M}\big(1-(1-\frac{M}{N})^{K-I}\big)-\frac{1}{{K \choose t}} \left[{K \choose t+1}-{I \choose t+1}\right], & \mbox{if}~t+1\le I. 
		\end{array}\right. 
	\end{align*} 
	It holds true that $G(K)=R_d(K)=R_c(K)=0$. Now we prove that $G(I)$ decreases with regarding to $I$, i.e. $G(I) \ge G(I+1)$, such that the maximum gain is given by $\max_I G(I)=G(1)$. The first derivative of $G$ is  
	\begin{equation}\label{Gcd}
		\Delta_G(I)=G(I+1)-G(I)=\Delta_{R_d}(I)-\Delta_{R_c}(I),
	\end{equation}
	where $\Delta_{R_d}(I)$ and $\Delta_{R_c}(I)$ are defined similar to $\Delta_G(I)$.
	
	As $R_c$ in \eqref{R_inact-obj} is piece-wise, the proof is operated for the two intervals separately. 
	
	When $I \in [t+1:K]$, We  derive the differences as 
	\begin{align*}
		\Delta_{R_d}(I)&=-\left(1-\frac{M}{N}\right)^{K-I}, \\
		\Delta_{R_c}(I)&=\frac{1}{{K \choose t}} \left[{I \choose t+1}-{I+1 \choose t+1} \right] 
		\overset{(a)}{=}\!-{I \choose t}/{K \choose t}\\
		&\overset{(b)}{\ge}-\left(\frac{K-t}{K}\right)^{K-I} 
		\ge \Delta_{R_d}(I),
	\end{align*}
	where $(a)$ is based on equation ${n+1 \choose k}={n \choose k}+{n \choose k-1}$ with $n, k$ are both positive integers, while $(b)$ is obtained as $\frac{K-t}{K}$ decreases in terms of $K$. In this case, it has been verified that $\Delta_G(I) \le 0$ when $I \in [t+1:K]$. In particular, the equality is achieved when $I=K-1$.  
	
	Moreover, when $I \in [1:t+1)$, $R_c=\frac{K(1-M/N)}{1+KM/N}$ is independent on the number of inactive users $I$, and hence $R_c$ remains constant which means that $\Delta_{R_c}(I)=0$ in the considered interval. Then it is apparent that $\Delta_G(I)=\Delta_{R_d}(I) < 0$.
	
	Finally, we verify that $\Delta_G(I) \le 0$ for all possible $I$. The maximum gain of centralized versus decentralized coded caching against user inactivity is given by $\max_I G=G(1)$.
	
	Consequently, we obtain that $0 \le G(I+1) \le G(I) \le G(1)$ for the consider interval $I \in [K-1]$, which can be observed in the simulation results in Subsection~\ref{sim3}.
	
	\subsection{Proposed Opt CC versus Ideal MAN Method}
	
	In this section, we aim to provide some insights on how the proposed centralized method, referred to as \textit{optimization based coded caching (Opt CC)}, performs compared with the ideal MAN method which assumes perfect user inactivity information in the content placement phase. 
	
	The gap between the backhaul load of Opt CC using \eqref{R_inact-obj} and the one derived using the ideal MAN method according to \eqref{R} is discussed here, which reflects the performance loss of Opt CC for lack of user inactivity information in the placement phase. We set $\tilde{G}=R_c-R_i$ with the backhaul load of the ideal MAN method given by 
	\begin{equation}\label{R_ideal}	
		R_i=\frac{1}{{J \choose \tilde{t}}} {J \choose \tilde{t}+1}
		=\frac{(K-I)\left(1-M/N\right)}{1+(K-I)M/N}, 
	\end{equation}
	where the total number of users $K$ is replaced by the number of active users $J=K-I$ while $t$ is updated to $\tilde{t}=JM/N$. 
	
	Next, we analyze how $\tilde{G}$ varies with regarding to $I$. $\tilde{G}$ is  
	\begin{align}\label{G1}
		\tilde{G}=\left\{\begin{array}{cl}
		 \frac{K(1-M/N)}{1+KM/N}-\frac{(K-I)\left(1-M/N\right)}{1+(K-I)M/N}, &\mbox{if}~t+1> I, \\
			\frac{1}{{K \choose t}} \left[{K \choose t+1}-{I \choose t+1}\right]-\frac{(K-I)\left(1-M/N\right)}{1+(K-I)M/N}, &  \mbox{if}~t+1 \le I. 
		\end{array}\right. 
	\end{align} 
	\subsubsection{Performance gap analysis against $I$ when $I \in [1:t+1)$}  \label{subsec-gap1}
	\hfill
	
	When $I \in [1:t+1)$, it is apparent that $\tilde{G}$ increases with regarding to $I$. When $I \to 0$, it holds true that $\lim_{I \to 0} \tilde{G}=0$ which shows that the Opt CC approximates the ideal MAN method with very low user inactivity. In general, we derive that $\tilde{G}(I+1) > \tilde{G}(I) >0$ for $I \in [1:t+1)$, and also $\lim_{I \to 0} \tilde{G}=0$. The observation agrees with the fact that the performance loss of Opt CC results from the lack of user inactivity and therefore increases with regarding to $I$. In this case, the maximum gap is $\tilde{G}(t)$ given by 
	\begin{equation}
		\tilde{G}(t)=\frac{1}{1+t}\left(1-\frac{1}{1+t-tM/N}\right).
	\end{equation}
	
	\subsubsection{Performance gap analysis against $I$ when $I \in [t+1:K)$}   \label{subsec-gap2}   
	\hfill
	
	We firstly consider the extreme case that 
	$\lim_{I \to K} \tilde{G}=0$. Moreover, we simplify the gain at starting point $I=t+1$ as 
	\begin{align}\label{tildeGt1}
		\tilde{G}(t+1)
		= &\frac{1-M/N}{1-M/N+(K-t)M/N}-\frac{1}{{K \choose t}} \notag \\
		= &\frac{1}{t+1}-\frac{(K-t)(K-t-1)\dots 2}{K(K-1)\dots(t+2)} \times \frac{1}{t+1} > 0, 
	\end{align}
	 To investigate the monotonicity of $\tilde{G}$ in this interval, we obtain the difference  $\Delta_{\tilde{G}}=\tilde{G}(I+1)-\tilde{G}(I)$ as 
	\begin{equation*} 
		\Delta_{\tilde{G}}=-\underbrace{\frac{{I \choose t}}{{K \choose t}}}_{\Delta_{\tilde{G}_1}}+\underbrace{\frac{\left(1-M/N\right)}{\left(1+(K-I)M/N\right)\left(1+(K-I-1)M/N\right)}}_{\Delta_{\tilde{G}_2}}.
	\end{equation*} 
	The first derivatives are obtained as 
	\begin{align*}
		&\Delta_{\tilde{G}_1}'(I)=\Delta_{\tilde{G}_1}(I+1)-\Delta_{\tilde{G}_1}(I)=\frac{{I \choose t}}{{K \choose t}} \frac{t}{I+1-t} >0,  \\
		&\Delta_{\tilde{G}_2}'(I)=\Delta_{\tilde{G}_2}(I+1)-\Delta_{\tilde{G}_2}(I)~~~~~~~~~~~~~~~~~~~~~~~~~~\\
		&=\frac{(1-M/N)(2M/N)}{(1+(K-I-2)M/N)(1+(K-I-1) M/N)(1+(K-I)M/N)}, 
	\end{align*}	
	which states both $\Delta_{\tilde{G}_1}$ and $\Delta_{\tilde{G}_2}$ increase with regarding to $I$. \footnote{Notation $(*)$ in $\Delta_{\tilde{G}_i}(*)$ specifies the values of $\Delta_{\tilde{G}_i}$ when $I=*$.} 
	Moreover, the corresponding second differences satisfy $\Delta_{\tilde{G}_1}''>0$,  $\Delta_{\tilde{G}_2}''>0$ because both $\Delta_{\tilde{G}_1}'$ and $\Delta_{\tilde{G}_2}'$ monotonically increases with regarding to $I$. 
	
	Now we discuss the ending and starting points of $\Delta_{\tilde{G}}$. The details of the ending point at $I=K-1$ can be obtained as
	\begin{align*}
		\Delta_{\tilde{G}_1}(K-1)&=1-M/N, \\
		\Delta_{\tilde{G}_2}(K-1)&=\frac{1-M/N}{1+M/N} < \Delta_{\tilde{G}_1}(K-1), 
	\end{align*}
	with the corresponding first derivative given by 
	\begin{align*}
		\Delta_{\tilde{G}_1}'(K-1)&=\frac{{K-1 \choose t}}{{K \choose t}} \frac{t}{K-t}=\frac{M}{N},~~~\\
		\Delta_{\tilde{G}_2}'(K-1)
		&=\frac{M}{(N+M)/2} >\Delta_{\tilde{G}_1}'(K-1).~~
	\end{align*}
	Similarly, we discuss the starting point when $I=t+1$ with the conclusion presented in Lemma~\ref{deltaGstart}. Note that the sign of $\Delta_{\tilde{G}}(t+1)$ is not fixed. It depends on the values of $t$ and $K$. 
	\begin{lemma}\label{deltaGstart}
		Assuming the starting point defined as $\Delta_{\tilde{G}}(t+1)=\Delta_{\tilde{G}_2}(t+1)-\Delta_{\tilde{G}_1}(t+1)$ with $t \in [1:K-2]$ when $I=t+1$ , it holds true that
		\begin{align}\label{deltaGtplus1} 
			\Delta_{\tilde{G}}(t+1) \left\{
			\begin{array}{cl}
				> 0, & \mbox{if}~~t \in [1:K-3], K \in [9:+\infty],\\
				> 0, & \mbox{if}~~t \in [1:K-4], K \in [6:8],\\
				< 0, & \mbox{if}~~t=K-3,~K \in [6:8], \\
			    <0, & \mbox{if}~~t \in [1:K-3], K \in [4:5].\\
				<0, & \mbox{if}~~t=K-2, K \in [3:+\infty]. 
			\end{array} \right. 
		\end{align}   
	\end{lemma}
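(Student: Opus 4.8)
The plan is to collapse the two-term difference $\Delta_{\tilde{G}}(t+1)=\Delta_{\tilde{G}_2}(t+1)-\Delta_{\tilde{G}_1}(t+1)$ into the sign of a single polynomial in the integers $t$ and $K$, and then to isolate two exactly solvable endpoint columns from a uniformly signed interior. The key simplification is to substitute $M/N=t/K$ into the denominator factors of $\Delta_{\tilde{G}_2}$: the first becomes $1+(K-t-1)\tfrac{t}{K}=\tfrac1K(t+1)(K-t)$ and the second $1+(K-t-2)\tfrac{t}{K}=\tfrac1K\bigl[(t+1)(K-t)-t\bigr]$. Evaluating at $I=t+1$ then gives
\begin{equation*}
\Delta_{\tilde{G}_1}(t+1)=\frac{t+1}{\binom{K}{t}},\qquad
\Delta_{\tilde{G}_2}(t+1)=\frac{K}{(t+1)\bigl[(t+1)(K-t)-t\bigr]}.
\end{equation*}
Both are strictly positive on $t\in[1:K-2]$, so cross-multiplying shows that the sign of $\Delta_{\tilde{G}}(t+1)$ coincides with the sign of
\begin{equation*}
F(t,K):=K\binom{K}{t}-(t+1)^2\bigl[(t+1)(K-t)-t\bigr].
\end{equation*}
The whole lemma thus reduces to a sign analysis of $F$.

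Next I would dispatch the two large-$t$ columns, which are the only ones where the binomial is low-degree and genuinely competes with the quartic correction. For $t=K-2$ one has $\binom{K}{t}=\binom{K}{2}$ and a direct computation factors $F(K-2,K)=\tfrac12K(K-1)(2-K)$, negative for all $K\ge3$; this is the last line of \eqref{deltaGtplus1}. For $t=K-3$ one has $\binom{K}{t}=\binom{K}{3}$, and after extracting the positive factor $K-2$ the sign of $F(K-3,K)$ is that of the cubic $K^3-13K^2+42K-36=(K-3)(K^2-10K+12)$. The quadratic factor has larger root $5+\sqrt{13}\approx8.61$, so this cubic is negative for $4\le K\le8$ and positive for $K\ge9$, which produces precisely the $t=K-3$ split across $K\in[4:5]$, $K\in[6:8]$ and $K\in[9:\infty]$.

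It then remains to handle the interior $t\le K-4$, which I would do columnwise in $t$. For fixed $t$, $F(t,\cdot)$ is a polynomial in $K$ of degree $t+1$ with positive leading coefficient, whose only competing term $(t+1)^3(K-t)$ is linear in $K$; hence $F(t,K)$ is eventually positive and increasing. The two thin columns $t\in\{1,2\}$ reduce to the explicit inequalities $(K-2)(K-6)>0$ and $\tfrac12K^2(K-1)>27K-72$, both valid for $K\ge7$, while for $t\ge3$ the leftmost admissible value $K=t+4$ already gives $F(t,t+4)=(t+1)\bigl[(t+2)(t+3)(t+4)^2/24-(t+1)(3t+4)\bigr]>0$, and monotonicity in $K$ propagates positivity to every larger $K$. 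Together with the endpoint analysis this establishes the positive signs for $K\ge9$, $t\in[1:K-3]$ and for $K\in[6:8]$, $t\in[1:K-4]$; the fully negative cases $K\in[4:5]$, $t\in[1:K-3]$ follow from the same explicit evaluations of $F$, the equality values $F=0$ arising only at the isolated points $(K,t)=(6,1),(6,2)$.

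The main obstacle is this last interior step. Unlike the endpoint columns, the interior is a genuine two-parameter family in which $K\binom{K}{t}$ and the quartic correction both grow, and near the boundary the margin $F$ is truly thin---indeed $F=0$ at $K=6$---so the monotonicity-in-$K$ propagation must be made quantitatively sharp enough to cover every $K\ge9$ uniformly in $t$ without leaving a residual band that would force further case splitting. Controlling this margin, rather than any single algebraic identity, is where the care of the proof lies.
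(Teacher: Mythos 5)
Your proposal is correct in substance and takes a genuinely different route from the paper. Both arguments start from the same evaluations $\Delta_{\tilde{G}_1}(t+1)=(t+1)/\binom{K}{t}$ and $\Delta_{\tilde{G}_2}(t+1)=K/\big((t+1)[(t+1)(K-t)-t]\big)$, and both isolate the same boundary comparisons: your $F(1,K)=(K-2)(K-6)$ and your cubic $K^3-13K^2+42K-36=(K-3)(K^2-10K+12)$ at $t=K-3$ are exactly the quantities the paper derives when it compares $\omega_1$ and $\omega_2$ at $t=1$ and $t=K-3$. The difference is how the interior is filled. The paper never clears denominators: it forms the growth ratios $\Omega_1(t)=\omega_1(t+1)/\omega_1(t)=(t+2)/(K-t)$ and $\Omega_2(t)=\omega_2(t+1)/\omega_2(t)$, shows each $\omega_i$ is unimodal in $t$, proves $\Omega_1(t)>\Omega_2(t)$ on the right half $t>\lceil K/2\rceil-2$, and then propagates the sign of $\omega_2-\omega_1$ by a two-sided recursion in $t$ (forward from $t=1$, backward from $t=K-3$) for $K\ge 9$, handling $K\in[5:8]$ by enumeration. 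You instead fix $t$, anchor at the leftmost admissible point $K=t+4$, and move in $K$. The one step you leave soft---that $F(t,\cdot)$ increases on all of $[t+4,\infty)$, not merely eventually---is true and routine to close: $F(t,K+1)-F(t,K)=K\binom{K}{t-1}+\binom{K+1}{t}-(t+1)^3$, which is increasing in $K$ and already positive at $K=t+4$ for $t\ge 3$ (it is at least $(t+4)\binom{t+4}{5}-(t+1)^3>0$), while for $t\in\{1,2\}$ your closed forms make monotonicity unnecessary. So the obstacle you flag is cosmetic rather than structural, and your induction arguably buys more than the paper's: it avoids the $\lceil K/2\rceil$ split, the ratio comparison, and the unimodality analysis entirely.

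One of your side observations deserves emphasis: $F(1,6)=F(2,6)=0$, so at $K=6$ the difference $\Delta_{\tilde{G}}(t+1)$ vanishes at both interior points $t\in\{1,2\}$, and the lemma's strict claim ``$>0$ for $t\in[1:K-4]$, $K\in[6:8]$'' is false at $K=6$; the correct statement there is ``$\ge 0$, with equality exactly at $K=6$.'' The paper is internally inconsistent on this point: its intermediate result for $\Delta_{\tilde{G}}(2)$ correctly records ``$\ge 0$ for $K\in[6:+\infty]$,'' yet its conclusion asserts strict positivity for $K\in[6:8]$. (The paper's proof also contains a sign slip at $t=K-2$, asserting $\omega_2(K-2)=\tfrac{1}{K-1}>\omega_1(K-2)=\tfrac{2}{K}$, which is false for $K\ge 3$ and would contradict the lemma's own conclusion in that case; your computation $F(K-2,K)=\tfrac{1}{2}K(K-1)(2-K)<0$ gives the correct sign.) Your blind reconstruction therefore not only recovers the result by a different method but also pinpoints a boundary case that the statement itself gets wrong.
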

	
	\begin{proof}
		See Appendix D.
	\end{proof}

	It is now proved that both $\Delta_{\tilde{G}_1}$ and $\Delta_{\tilde{G}_2}$ increase in terms of $I$ with increasing growth rates. According to the sign of the starting point $\Delta_{\tilde{G}}(t+1)$, there are two cases regarding the impact of $I$ on $\Delta_{\tilde{G}}$ based on the derivatives:  
	\begin{itemize} 
		\item Case I: When $\Delta_{\tilde{G}_2}(t+1) > \Delta_{\tilde{G}_1}(t+1)$, there is an intersection point for $\Delta_{\tilde{G}_1}$ and $\Delta_{\tilde{G}_2}$ in the interval. Hence, $\tilde{G}$ goes up from $I=t+1$ until reaching a peak, and then falls to $0$ at $I=K$. Only one peak point is guaranteed at some $I^*$. Meanwhile, $\Delta_{\tilde{G}} \ge 0$ during $I \in [t+1:I^*]$, and $\Delta_{\tilde{G}} <0$ when $I \in (I^*:K)$.  
		\item Case II: When $\Delta_{\tilde{G}_2}(t+1) \le \Delta_{\tilde{G}_1}(t+1)$, it holds true that $\Delta_{\tilde{G}}(t+1)<0$ in the interval. Considering the ending point and derivatives, it is derived that $\Delta_{\tilde{G}} <0$ during the interval $[t+1:K-1]$. It means that $\tilde{G}$ decreases with regarding to $I$ during the interval. The maximum value of $\tilde{G}$ in the  interval is $\tilde{G}(t+1)$ given in \eqref{tildeGt1}. 
	\end{itemize}
	
	\subsubsection{Summary of performance gap $\tilde{G}$ against $I$}\hfill
	
	Now we summarize the performance gap between $R_c$ and $R_i$ against $I$ when $I \in [K-1]$ in Theorem~\ref{theorem-gap}. 
	
   \begin{theorem}\label{theorem-gap}
	The value of $\tilde{G}$ defined in \eqref{G1} varies with different values of $I$ in the following manner:
	\begin{itemize} 
		\item If $I \in [1:t+1)$, $\tilde{G}$ increases in terms of $I$ as mentioned previously. In particular, $\lim_{I \to 0} \tilde{G}=0$ which agrees with the fact that Opt CC performs the same as the ideal MAN method without user inactivity. 
		\item If $I \in [t+1:K)$, the starting point satisfies $\tilde{G}(t+1) >0$ and the ending point follows $\lim_{I \to K} \tilde{G}=0$. 
		\item Meanwhile, $\tilde{G}$ may continue rising from $(I=t+1)$ until reaching the peak at $(I=I^*)$, and then decreases to $0$ at $(I=K-1)$. Only one peak point is guaranteed at $I^*$ during the whole interval. Alternatively, it probably starts to decrease with regarding to $I$ from $(I=t+1)$ with the maximum value being $\tilde{G}(t+1)$.
	\end{itemize}
	\end{theorem}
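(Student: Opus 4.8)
The plan is to obtain Theorem~\ref{theorem-gap} as a synthesis of the two regime analyses already completed in Subsections~\ref{subsec-gap1} and \ref{subsec-gap2}, gluing the behavior on $I\in[1:t+1)$ to that on $I\in[t+1:K)$. I would state the three bullets as three separate claims and verify them in turn, so that the bulk of the proof reduces to collecting, and slightly sharpening, facts already established before the statement.

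For $I\in[1:t+1)$ the argument is elementary. In this range the load $R_c$ in \eqref{R_inact-obj} equals its first branch $K(1-M/N)/(1+KM/N)$ and is independent of $I$, whereas the ideal load $R_i$ in \eqref{R_ideal} is the MAN load with $K$ replaced by $K-I$ and is therefore strictly decreasing in $I$. Hence $\tilde{G}=R_c-R_i$ of \eqref{G1} is strictly increasing on this interval, and letting $I\to 0$ in \eqref{R_ideal} gives $R_i\to R_c$, i.e. $\lim_{I\to 0}\tilde{G}=0$. This settles the first bullet. For $I\in[t+1:K)$ I would first record the boundary data: $\tilde{G}(t+1)>0$ is exactly \eqref{tildeGt1}, and $\lim_{I\to K}\tilde{G}=0$ follows by inspection of \eqref{G1}. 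The shape is then governed by the forward difference $\Delta_{\tilde{G}}=\Delta_{\tilde{G}_2}-\Delta_{\tilde{G}_1}$, whose two pieces are positive and increasing, with the starting sign supplied by Lemma~\ref{deltaGstart} and the endpoint relation $\Delta_{\tilde{G}_1}(K-1)>\Delta_{\tilde{G}_2}(K-1)$ giving $\Delta_{\tilde{G}}(K-1)<0$. A sign split on $\Delta_{\tilde{G}}(t+1)$ then distinguishes Case~II (monotone decrease, maximum $\tilde{G}(t+1)$) from Case~I (rise to an interior peak $I^{*}$, then decrease to $0$), which is the content of the remaining bullets.

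The main obstacle is justifying the word \emph{only} in ``only one peak point is guaranteed'': I must prove that $\Delta_{\tilde{G}}$ changes sign at most once on $[t+1:K-1]$, and that the change is from $+$ to $-$. The mere positivity of the second differences $\Delta_{\tilde{G}_1}''$ and $\Delta_{\tilde{G}_2}''$ recorded earlier is \emph{not} sufficient, since a difference of two increasing convex sequences can in principle cross zero twice. I would instead work with the ratio $\rho(I)=\Delta_{\tilde{G}_2}(I)/\Delta_{\tilde{G}_1}(I)$ and show that $\log\rho$ is convex in $I$: the multiplicative growth factor of $\Delta_{\tilde{G}_1}(I)={I \choose t}/{K \choose t}$ equals $(I+1)/(I+1-t)$ and is \emph{decreasing} in $I$ (so $\Delta_{\tilde{G}_1}$ is log-concave), whereas the growth factor of $\Delta_{\tilde{G}_2}$ equals $(1+(K-I)M/N)/(1+(K-I-2)M/N)$ and is \emph{increasing} in $I$ (so $\Delta_{\tilde{G}_2}$ is log-convex); subtracting the two makes the forward difference of $\log\rho$ increasing, i.e. $\log\rho$ convex.

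Finally I would close the argument using the convexity of $\log\rho$ together with the endpoint value $\rho(K-1)=1/(1+M/N)<1$. In Case~II one has $\rho(t+1)\le 1$, so $\log\rho\le 0$ at both ends; discrete convexity places $\log\rho$ below the chord, hence $\rho\le 1$ and $\Delta_{\tilde{G}}\le 0$ throughout, giving the monotone decrease. In Case~I one has $\rho(t+1)>1$, so $\log\rho$ is positive at the left end and negative at the right end; a convex sequence with these endpoint signs descends through zero exactly once and cannot return positive within the interval, which is precisely the single crossing $+\to-$ that produces a unique peak $I^{*}$. Combining this with the $[1:t+1)$ analysis then yields all three bullets of the theorem.
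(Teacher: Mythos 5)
Your proposal is correct, and its skeleton coincides with the paper's proof: the paper also proves Theorem~\ref{theorem-gap} by assembling the regime analyses of Subsections~\ref{subsec-gap1} and \ref{subsec-gap2} — monotonicity of $\tilde{G}$ on $[1:t+1)$ from the constancy of $R_c$ there, the boundary values $\tilde{G}(t+1)>0$ and $\lim_{I\to K}\tilde{G}=0$, and a case split on the sign of $\Delta_{\tilde{G}}(t+1)$ supplied by Lemma~\ref{deltaGstart}. Where you genuinely depart from the paper is the pivotal uniqueness step. The paper supports ``only one peak point'' by noting that $\Delta_{\tilde{G}_1}$ and $\Delta_{\tilde{G}_2}$ are both increasing with positive second differences and then reading off the endpoint orderings $\Delta_{\tilde{G}_2}(K-1)<\Delta_{\tilde{G}_1}(K-1)$, $\Delta_{\tilde{G}_2}'(K-1)>\Delta_{\tilde{G}_1}'(K-1)$; as you correctly observe, this alone does not rule out multiple crossings, since the difference of two increasing convex sequences can change sign repeatedly (e.g.\ perturbing one convex sequence by a small oscillation preserves convexity). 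Your replacement — showing $\Delta_{\tilde{G}_1}(I)={I\choose t}/{K\choose t}$ is log-concave because its growth factor $(I+1)/(I+1-t)$ decreases, while $\Delta_{\tilde{G}_2}$ is log-convex because its growth factor $\bigl(1+(K-I)M/N\bigr)/\bigl(1+(K-I-2)M/N\bigr)$ increases, hence $\log\rho$ is discretely convex and its nonpositive sublevel set is an interval anchored at $I=K-1$ — delivers the single $+\to-$ crossing rigorously, and it simultaneously handles Case~II by the below-the-chord property. In short, you buy a complete proof of the ``only one peak'' claim at the cost of one extra observation (the two growth-factor monotonicities), whereas the paper's argument is shorter but leaves that claim resting on evidence (convexity of each piece plus endpoint data) that is strictly weaker than what is asserted.
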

	\begin{proof}
     To prove Theorem~\ref{theorem-gap}, follow the discussion in subsection~\ref{subsec-gap1} and subsection~\ref{subsec-gap2}.  
	\end{proof}
	
	\section{Simulations}\label{sim}  
	\begin{figure}[htbp] 
		\centering
		\includegraphics[width=13cm]{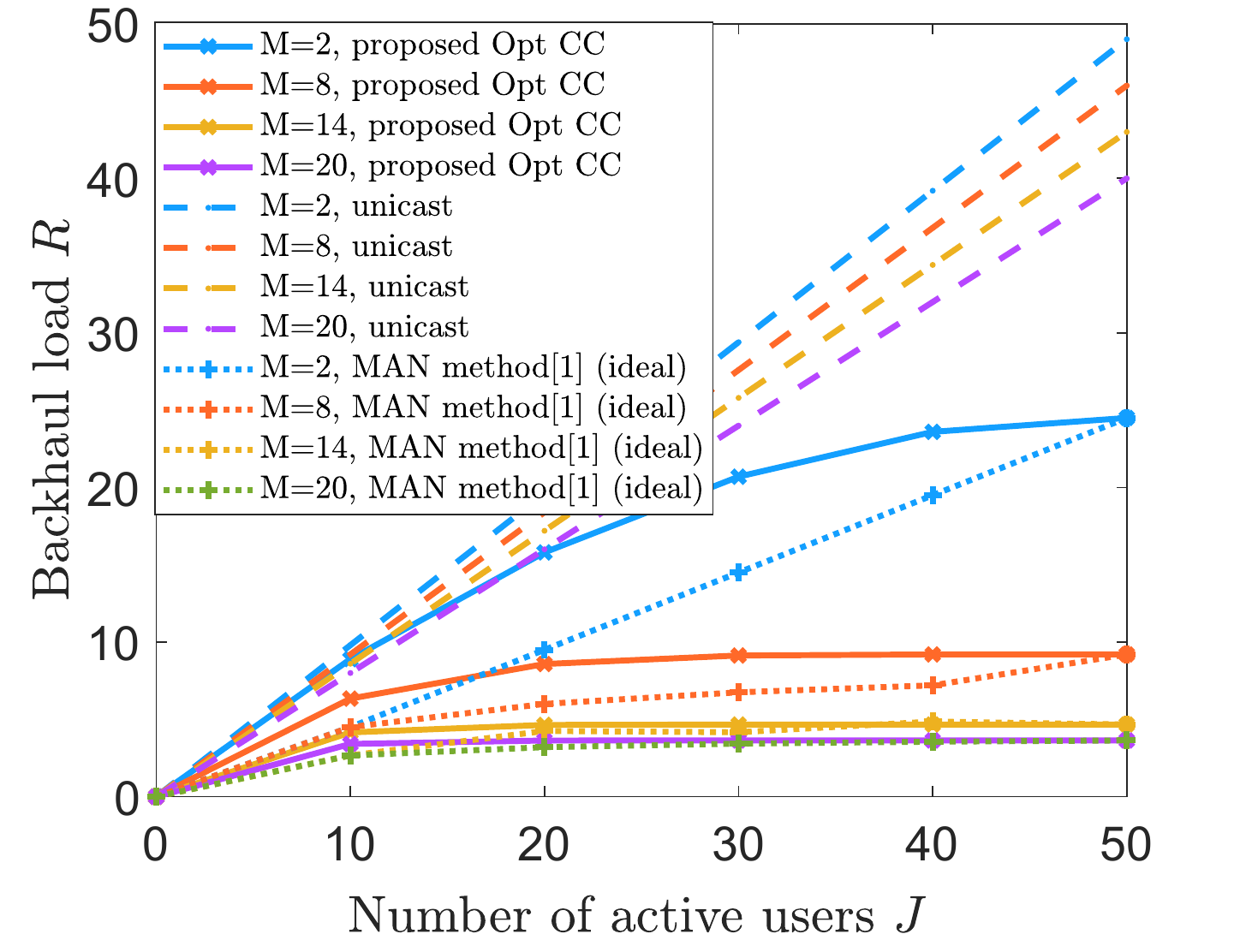}
		\caption{Backhaul load of coded caching network with user inactivity}\label{figsim_1}  
	\end{figure}
	\subsection{The Proposed Coded Caching Method with fixed cadinality}\label{sim1}
	A one-server cache-enable network is considered. There are $K=50$ users and $N=100$ files with equal popularity and size. Fig.~\ref{figsim_1} presents the impacts of cache size $M$ and number of active users $J$ on worst case backhaul load. The performance of the proposed optimization based coded caching with fixed cardinality is compared to the unicast caching method, and the ideal MAN method~ \cite{M.MaddahAliMay2014} where perfect user inactivity information is assumed in the cache placement phase.   
	
	As can be seen Fig.~\ref{figsim_1}, the proposed scheme outperforms the unicast caching scheme while provides a compatible performance to the idea MAN method against user inactivity. In general, the backhaul load decreases with the increase of cache size $M$ and rises with the increase of the number of active users $J$. Increasing $M$, the backhaul load decreases dramatically. The decrease of backhaul load is more obvious with more active users, i.e. larger $J$. The gap between them decreases when the cache size $M$ rises. When $M=20$, the proposed method is approximately the same as the MAN method in the ideal case. Increasing $J$ causes some increase in backhaul load, but the gap is limited for the proposed method, e.g. less than $10$. The gap is getting narrower when increasing cache size $M$. That is to say, the proposed method can provide an acceptable solution to user inactivity. Particularly when $J=K=50$, the proposed method is the same as the MAN method as all the users are active. Moreover, the backhaul load tends to be stable when $J$ tends to $K$, which agrees with \eqref{R_inact-obj} since the backhaul load is independent of the number of inactive users $I$ when $I<t+1$, i.e. $J>K-t+1$. For instance, when $M=8$, the solid curve in orange becomes stable after reaching $J=43$. 
	\subsection{Optimal Coded Caching Method with multiple cardinalities}\label{sim2}
	To investigate the optimal coded caching method in presence of user inactivity, the performance of coded caching using fixed cardinality and multiple cardinalities in file subpacketization is compared. Results obtained using optimization solver CVX are compared to the closed form optimal solution~\eqref{sol2}-\eqref{sol3}. 
	
	\begin{figure}[htb] 
		\centering
		\includegraphics[width=13cm]{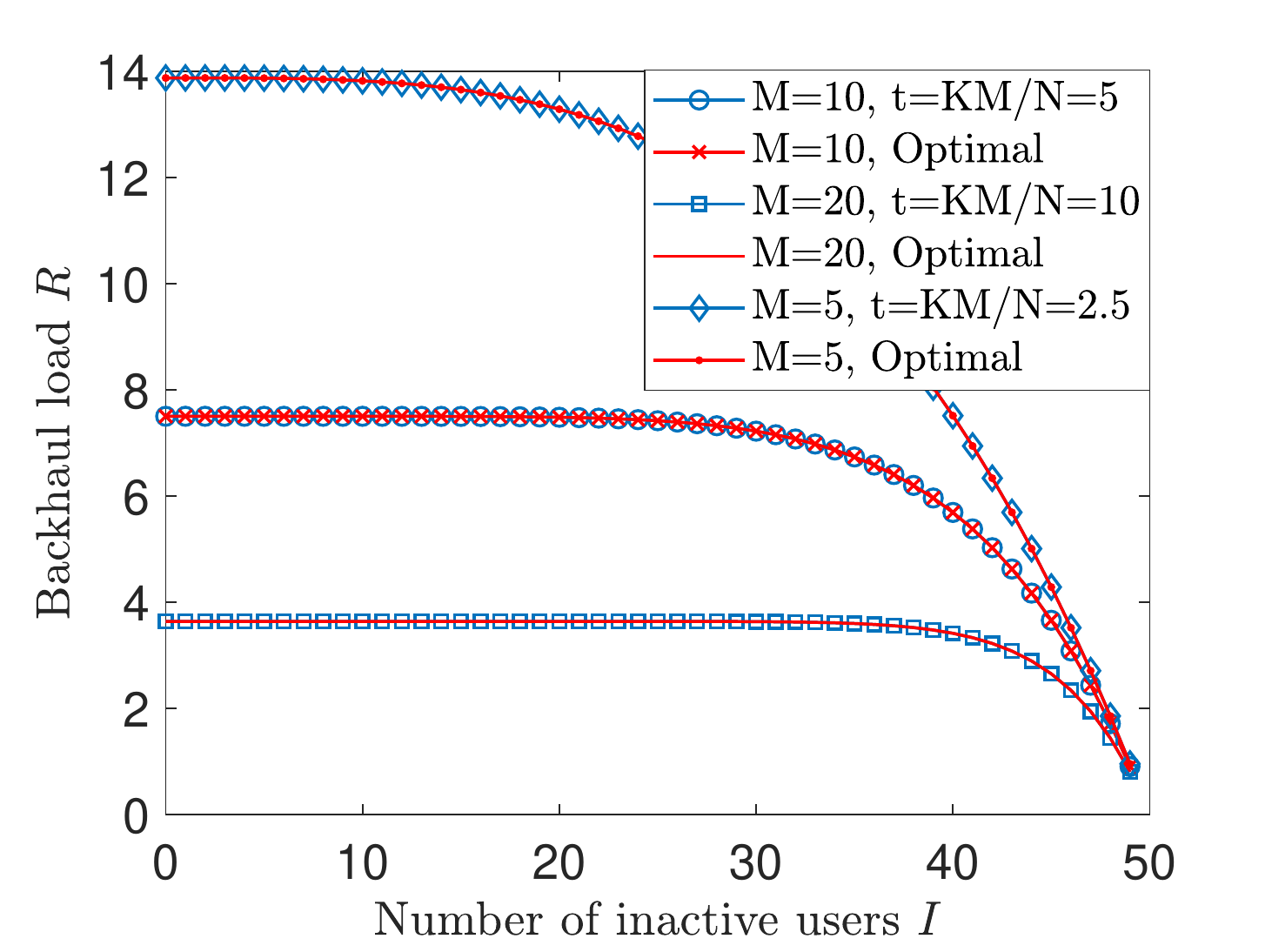}
		\caption{Performance comparison between coded caching with fixed $l$ and optimal scheme with multiple cardinalities against user inactivity.}\label{figsim_2}  
	\end{figure} 
	
	In Fig.~\ref{figsim_2} we consider a cache enabled network where we let $K=50, N=100, M=5/10/20$, and the number of inactive users varies within $[0:K)$, to compare the performance of the optimal solution for coded caching with multiple $l \in [0:K]$ and the one with fixed $l=t$. 
	The simulation confirms the closed form solution. The backhaul load for the proposed caching scheme with fixed cardinality is always the same as the optimal solution with multiple $l$ for all the different values of inactive users and cache size.
	
	\subsection{Centralized Method versus Decentralized Method}\label{sim3}
		\begin{figure}[htb]   
		\centering
		\includegraphics[width=13cm]{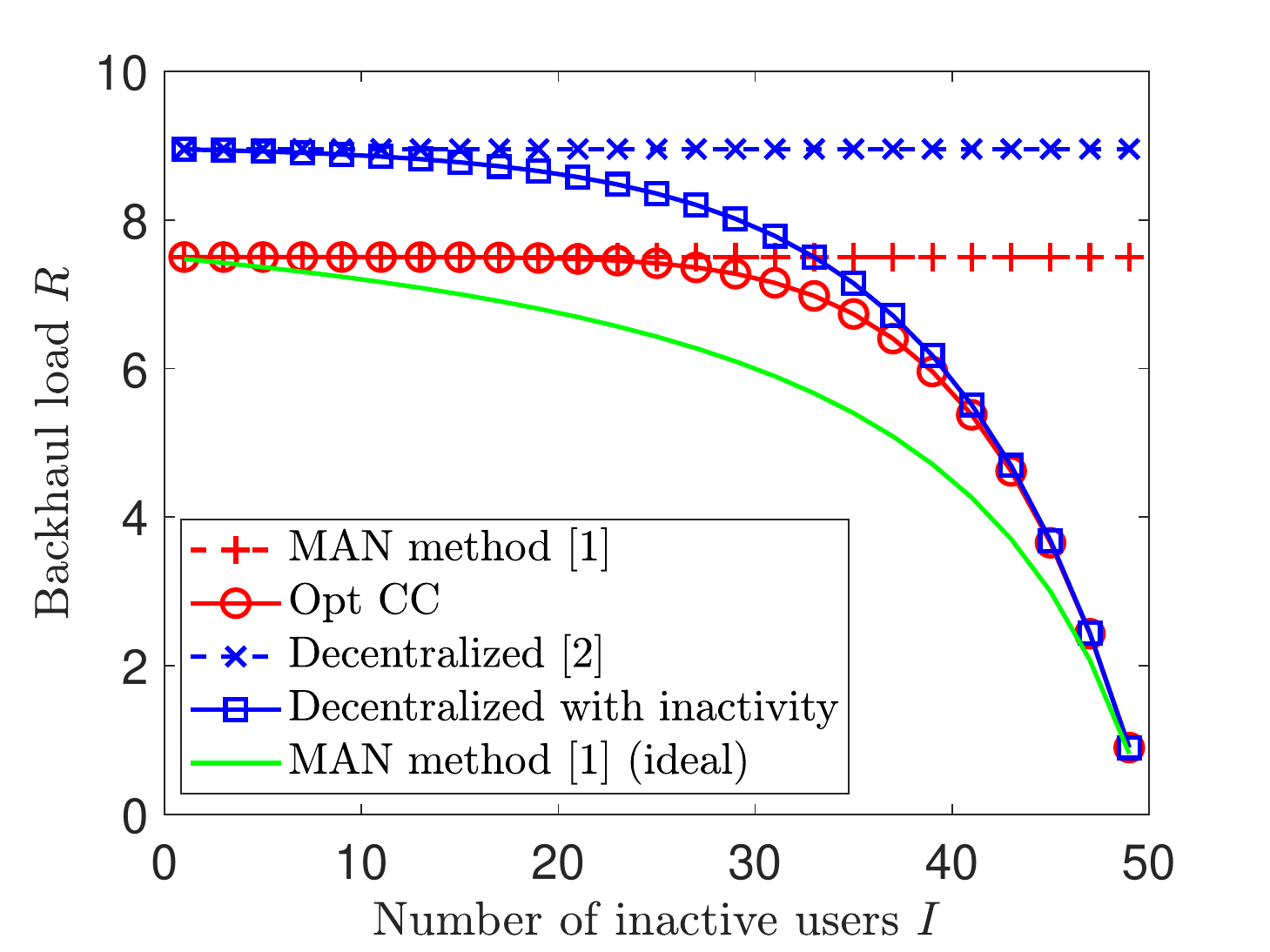}
		\caption{Performance comparison between coded caching methods with different numbers of inactive users $I$: centralized versus decentralized ($M=10, K=50$).}\label{figsim_3}  
	\end{figure} 
	
	\begin{figure}  
		\centering
		\includegraphics[width=13cm]{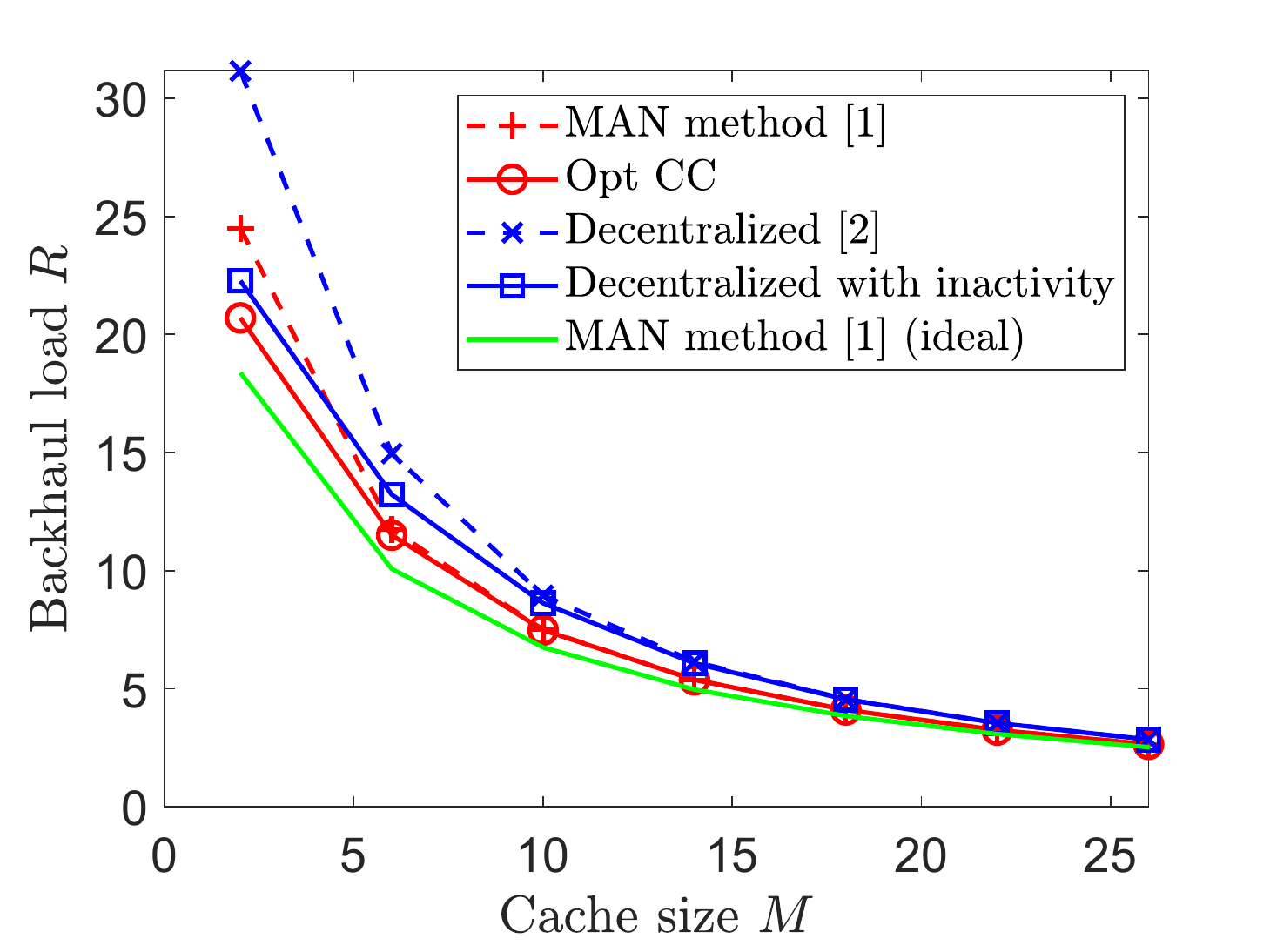}
		\caption{Performance comparison between coded caching methods with different normalized cache sizes $M$: centralized versus decentralized ($I=20, K=50$).}\label{figsim_4}  
	\end{figure} 
	
	In this section, we compare the performance of centralized and decentralized coded caching methods. Performance is measured in terms of worst case backhaul load $R$ with the impacts of the number of inactive users $I$ and cache size $M$ investigated. In Fig.~\ref{figsim_3} and Fig.~\ref{figsim_4}, we have $K=50, N=100$. 
	In Fig.~\ref{figsim_3} we fix $M=10$ and vary the number of inactive users $I\in [K-1]$, while in Fig.~\ref{figsim_4} we fix the number of inactive users at $I=20$ and consider different cache sizes $M \in [N/4]$.

   Five methods are compared. The centralized methods consist of: the MAN method without user inactivity, assuming $J=K=50$, with backhaul load calculated using \eqref{R}; the proposed optimization based coded caching (Opt CC) given in Alg.~\ref{alg_cen}, using \eqref{R_inact-obj}; as well as the ideal MAN method in \eqref{R_ideal} assuming perfect user inactivity information in placement phase. The decentralized methods consist of: a decentralized caching without user inactivity $J=K=50$ based on \eqref{Rdecen} and decentralized caching with user inactivity following Alg.~\ref{alg_dec} using \eqref{RdecenInact}. 
	
	As can be seen in Fig.~\ref{figsim_3}, in this scenario, centralized coded caching provides at most around $17\%$ gain over decentralized methods. Considering the impacts of the number of inactive users $I$, the proposed Opt CC outperforms the decentralized manner with user inactivity, and it is closer to the ideal centralized coded caching method. Moreover, the gain of Opt CC over the decentralized manner with user inactivity decreases from around $17\%$ to $0\%$ when the number of inactive users $I$ rises from $I=1$ to $K-1$. 
	The advantage of the centralized method results from coordination among users, which decreases when there are fewer active users to be served. 
	When approaching the minimum $J=1$ active user, the performance of the centralized and decentralized method gradually converge.  The gap between Opt CC and the ideal MAN method in Fig.~\ref{figsim_3} agrees with the first case in Section \ref{subsec-gap2}; the gap $\tilde{G}$ grows with increasing $I$ 
	until peaking at $I^* \approx 35$, and then decreases to $0$ at $I \to K$.

	Fig.~\ref{figsim_4} shows the dependence of backhaul performance on 
	the cache size $M$, with $I=20$. For all schemes of interest, the load decreases with increasing storage space as expected. Moreover, the rate of decrease shrinks with increasing cache size. Centralized Opt CC outperforms the decentralized methods with or without user inactivity, and the conventional MAN method while approximating the ideal MAN method.  
	
	\section{Conclusions}
	In this paper, we studied the coded caching strategy for one-server cache-enable networks in presence of inactive users. Based on the classic file subpacketization and coded multicast strategy, a coded caching method with fixed cardinality of the fragment label set has been proposed and the optimality of selected cardinality $t=KM/N$, which is known as the cache replication parameter used in MAN method without user inactivity, has also been proved. The scheme is extended to a scenario where multiple cardinalities can be used instead of a fixed one. The weights for different types of fragments labeled with different cardinalities are optimized in order to minimize the worst case backhaul load. The optimal solution turns out to be the same as the one derived from the proposed caching scheme with fixed cardinality. Besides centralized coded caching, a decentralized method is also discussed. Next, we compare the backhaul load as a function of the number of inactive users $I$, provided by the identified optimal centralized method, the decentralized method, and the ideal MAN method assuming perfect knowledge of user inactivity. Numerical results show that the optimization based centralized scheme outperforms the decentralized scheme in presence of user inactivity, and approximates the ideal MAN method at the same time. 
	
	\section*{Appendix}
	
	\subsection{Proof of Lemma~\ref{lemma-optFix-obj}}
	To proceed, we compute the ratio $R(l+1)/R(l)$, with $1\leq l \leq I-2$. By expanding the binomial coefficients we get
	\begin{eqnarray}
		\frac{R(l+1)}{R(l)} &=& \frac{\big[{K \choose l+2}-{I \choose l+2}\big]}{\big[{K \choose l+1}-{I \choose l+1}\big]} \frac{{K \choose l}} {{K \choose l+1}}  \cr
		&=&  \frac{  A(l) }{ A(l) + B(l)-C(l)} \nonumber
	\end{eqnarray}
	where 
	\begin{eqnarray}
		A  &=& (l+1) \left(\frac{K!}{(K-l-2)!}  - \frac{I!}{(I-l-2)!}\right)\cr
		B  &=& \frac{(K+1)!}{(K-l-1)!} \cr
		C &=& (K+1 +(K-I)(l+1))\frac{I!}{(I-l-1)!}\,.\nonumber
	\end{eqnarray}
	Because $K \ge I$, we have $A \ge 0$, as well as $B>0$ and $C>0$ in the domain of interest. It is thus sufficient to  prove  that $B>C$ for all $l$ in the domain. For this, we first use the fact that $(K-m)/(I-m) \geq K/I$ for non-negative $m$ and $K\geq I$ to lower bound the $l+1$ smallest terms in the $l+2$-fold product in $B$ to get
	\begin{equation}
	    B \geq (K+1)\left(\frac{K}{I}\right)^{l+1} \frac{I!}{(I-l-1)!}
	\end{equation}
	Writing $K/I = 1 + (K-I)/I$ we can expand $(K/I)^{l+1}$ using the binomial expansion. When $K>I$, all terms in the expansion are positive. Comparing the two first terms in this expansion to $C$ directly shows that $B>C$ holds. This completes the proof. \hfill  

	\subsection{Proof of Lemma~\ref{lemma-step1}}\label{proof-step1}	
	The objective function \eqref{R1_0} is reformulated to 
	\begin{equation}
		f({\boldsymbol \beta})=\sum_{l=0}^{K-1} c^l \beta^l=\sum_{l=0}^{I-1} c_1^l \beta^l + \sum_{l=I}^{K-1} c_2^l \beta^l, 
	\end{equation}
	where it is defined that ${\boldsymbol c} \triangleq [c^0,c^2,\dots,c^{K-1}]$ denoting the coefficients in the objective given by 
	\begin{align} \label{c}
		c^l=\left\{\begin{array}{cl}
			c_1^l=\frac{K-l}{l+1}-{I \choose l+1}/{K \choose l}, &  \mbox{if}~ 0 \le l \le I-1,   \\ 
			c_2^l=\frac{K-l}{l+1}, &  \mbox{if}~ I \le l \le K-1.  
		\end{array}\right. 
	\end{align} 
	According to \eqref{c}, the coefficient ${\boldsymbol c}$ is a piece-wise function in terms of $l$ consisting of two sub-functions, ${\boldsymbol c_1}$ for interval $l \in [0:I-1]$ and ${\boldsymbol c_2}$ for $l \in [I:K-1]$, respectively. We will start with the discussion of ${\boldsymbol c_2}$, then move to ${\boldsymbol c_1}$. 
	\subsubsection{The first derivative of coefficients ${\boldsymbol c}$} 
	It is apparent that ${\boldsymbol c_2}$ is monotonically decreasing in terms of $l$ in the interval. The first derivative ${\boldsymbol d_2}=[d_2^I,d_2^2,\dots, d_2^{K-1}]$ is 
	\begin{equation}
		d_2^l=c_2^{l+1}-c_2^l=\frac{-K-1}{(l+2)(l+1)}<0,\label{d2}
	\end{equation} 
	and the second derivative ${\boldsymbol e_2}=[e_2^I,e_2^2,\dots, e_2^{K-1}]$ is given by 
	\begin{equation}
		e_2^l=d_2^{l+1}-d_2^l=\frac{2(K+1)}{(l+3)(l+2)(l+1)}>0.\label{e2}
	\end{equation} 
	Thus, it implies a negative first derivative and a positive second derivative for ${\boldsymbol c_2}$.  
	
	Now we discuss the monotonicity of ${\boldsymbol c_1}$. We rewrite ${\boldsymbol c_1}$ into 
	\begin{equation}
	  	c_1^l=\bigg[{K \choose l+1}-{I \choose l+1}\bigg]/{K \choose l},  \label{c1}  	\end{equation}
	utilizing $\frac{K-l}{l+1}={K \choose   l+1}/{K \choose l}$. Recalling the result from Appendix B that  
	\begin{equation}
		\frac{R(l+1)}{R(l)}=\frac{\big[{K \choose l+2}-{I \choose l+2}\big]}{\big[{K \choose l+1}-{I \choose l+1}\big]} \frac{{K \choose l}} {{K \choose l+1}} <1,  \label{B_R}
	\end{equation}
	it follows that $0 < c_1^{l+1}/c_1^l <1$. Hence, the first derivative of ${\boldsymbol c_1}$, referring to as ${\boldsymbol d_1}$ satisfying  
	$d_1^l=c_1^{l+1}-c_1^l < 0$.
	
	Since we have confirmed that ${\boldsymbol c_1}$ and ${\boldsymbol c_2}$ are monotonically decreasing in their own intervals, now we need to prove $c_1^{I-1} > c_2^I$ holds true. To proceed, we simplify $c_1^{I-1}$ to   
	\begin{equation}\label{breakp0}
		c_1^{I-1}=\frac{K-I+1}{I}-\frac{1}{{K \choose I-1}} 
		> \frac{K-I}{I+1}+\left(\frac{1}{I}-\frac{1}{{K \choose I-1}}\right).	
	\end{equation} 
	As $c_2^I=\frac{K-I}{I+1}$ and $I \le  {K \choose I-1}$ for $I >1$, we have $c_1^{I-1} > c_2^I$. 
	

	Similarly, we need to prove  $d_1^{I-1} < d_2^I$ to guarantee the consistence when combining ${\boldsymbol d_1}$ and ${\boldsymbol d_2}$ which is required in the third step. Hence, we derive 
	\begin{align}\label{breakp2}
		d_1^{I-1}-d_2^I&=(c_2^I-c_1^{I-1})-d_2^I \notag\\
		&=\frac{(K-I+1)(K-I)\dots 4}{K(K-1)\dots (I+3)}\times \frac{3 \times 2}{(I+2)(I+1)I} 
		-\frac{2(K+1)}{I(I+1)(I+2)}. 
	\end{align} 
	According to \eqref{breakp2}, it holds true that $d_1^{I-1}-d_2^I < 0$ when $K \ge 4$. Because there are at least three points, i.e. $I-1, I, I+1$, are considered, it implies that $I+1 \le K-1$. If more than one inactive users are targeted, it follows that $K \ge 4$.  
	
	\subsubsection{The second derivative of coefficients ${\boldsymbol c}$} 
	The second derivative of ${\boldsymbol c_1}$ defined as ${\boldsymbol e_1}=[e_1^0,e_1^1,\dots, e_1^{I-3}]$ is  
	\begin{align}
		e_1^l&=d_1^{l+1}-d_1^l=c_1^{l+2}-2c_1^{l+1}+c_1^l \notag\\
		&=\frac{2(K+1)}{(l+3)(l+2)(l+1)}-\frac{{I \choose l+3}}{{K \choose l+2}}+2\frac{{I \choose l+2}}{{K \choose l+1}}-\frac{{I \choose l+1}}{{K \choose l}}.
		\label{e1}
	\end{align} 
	For sake of simplification, we introduce a new group of variables $\boldsymbol{\hat{e}_1}=\{\hat{e}_1^l\}$ with $\hat{e}_1^l=(l+3)(l+2)(l+1) K(K-1)\dots(I+1) e_1^l, l \in [0:I-1]$. $\boldsymbol{\hat{e}_1}$ can be written as 
	\begin{align} \label{hat-e1}
		\hat{e}_1^l  
		=&2(K+1)K \dots(I+1)-(K-l-2)(K-l-3)\dots(I-l) \notag \\ 
		&\times \left[(I-l-1)(I-l-2)(l+2)(l+1) \right.
		\left. +(K-l)(K-l-1)(l+3)(l+2)\right. \\
		&\left. -2(K-l-1)(I-l-1)(l+3)(l+1) \right]. \notag 
	\end{align}  
	
	Firstly, some hypothesis is given based on numerical results: 
	\begin{itemize} 
		\item When $l$ is fixed, $\hat{e}_1^l(I)$\footnote{Here we utilize $\hat{e}_1^l(\alpha)$ to specify $\hat{e}_1^l$ with $I$ fixed at $I=\alpha$.} decreases with the increase of $I$; 
		\item When $I$ is fixed, $\hat{e}_1^l(I)$ is increasing with regarding to $l$;  
		\item Based on the previous properties, one can derive that the minimum value of $\hat{e}_1^l(I), ~l \in [0:I-1],~I \in [K-1] $ is $\hat{e}_1^0$ with $I=K-1$, i.e. $\hat{e}_1^0(K-1)$. 
	\end{itemize}
	If one can verify the hypotheses one by one, $\hat{e}_1^l \ge 0$ is then proved. To simplify the proof, we shall relax the conditions because when the first hypothesis is proved, we only need to focus on $I=K-1$ and prove that $\hat{e}_1^l$ is an increase function with regarding to $l$ when $I=K-1$ (or a constant function in particular). Then it will holds true that $\min_{l, I}~{\hat{e}_1^l}(I) \ge 0$, so that $\hat{e}_1^l(I) \ge 0, l \in [0:I-1], I \in [K-1]$. 
	
	Firstly, we shall prove that $\hat{e}_1^l(I+1) < \hat{e}_1^l(I)$ for any $l \le I$.  
	For sake of clarification, here we introduce three auxiliary variables $A, B, C_{I}$ given by  
	\begin{align}
		\left\{\!\!\begin{array}{cl}
			&A=2(K+1)K \dots(I+2), \\
			&B=(K-l-2)(K-l-3)\dots (I-l+1), \\
			&C_{I}=(I-l-1)(I-l-2)(l+2)(l+1) \\
			&~~~~~+(K-l)(K-l-1)(l+3)(l+2)\\
			&~~~~~-2(K-l-1)(I-l-1)(l+3)(l+1).   
		\end{array} \right. 
	\end{align}
	Using the expressions above, we obtain that
	\begin{align}\label{delta-hat-e1} 
		\hat{e}_1^l(I)-\hat{e}_1^l(I+1) 
		&=A(I+1)-B(I-l)(C_{I})-A-B(C_{I+1})  \notag\\
		&=AI-B\left((I-l)C_{I}-C_{I+1}\right)  \\
		&=B\left[2(K+1)KI \gamma-((I-l)C_{I}-C_{I+1})\right] \notag
	\end{align} 
   where $\gamma=\frac{A}{2(K+1)KB}=\frac{\alpha}{\beta}$ with auxiliary variables defined as $\alpha=(K-1)\dots (I+2)$ and $\beta=(K-l-2)\dots (I-l+1)$.
	
	Now \eqref{delta-hat-e1} simplifies into $\hat{e}_1^l(I)-\hat{e}_1^l(I+1) =B \times \Xi,$  
	with the expression $\Xi$ defined as 
	\begin{align}\label{Lambda-hat-e1} 	
		\Xi=&2(K+1)KI \gamma  
		-(I-l)(I-l-1)(l+1)(l+2)(I-l-3)~~ \notag\\
		&-(K-l)(K-l-1)(l+3)(l+2)(I-l-1) \\
		&+2(K-l-1)(I-l)(l+3)(l+1)(I-l-2).\notag
	\end{align} 
	In this case, it remains to prove that $\Xi > 0$ to guarantee $\hat{e}_1^l(I) > \hat{e}_1^l(I+1)$. 
	With $K, I$ and $l$ defined above, i.e. $0 \le l \le I-1, I \le K-2, K \in [3: +\infty]$, it holds true that $\Xi > 0$. This can be easily proved based on calculation so that the detailed derivation is omitted here for briefness. The main idea is to compute the derivation of $\Xi$ in terms of $K$ utilizing the derivation of $\gamma$ and then prove it is positive, so that one can easily get the minimum value of $\Xi$ which is positive ($\min{\Xi}=12$), with the smallest $K=3$ and the unique combination $(I, l)=(1, 0)$.  

	Second, we analyze the values of $\boldsymbol{\hat{e}_1}$ when $I=K-1$ which have been proved to be minimum values with regarding to $I$. 
	
	In this case, the minuend term in \eqref{hat-e1} is independent on $l$, so that the subtrahend referring to as $\Psi^l$ is the only term requiring to be analyzed. We then rewrite \eqref{hat-e1} into 
	\begin{equation}
		\hat{e}_1^l(K-1)=2(K+1)K -\Psi^l,
	\end{equation}
	where variable $\Psi^l$ defined as \footnote{Note that the coefficient term $(K-l-2)(K-l-3)\dots(I-l)$ in \eqref{psi} equals to $1$ as it is derived sequentially on condition of $(K-l-2)<(K-l-3)\dots<(I-l)$.}
	\begin{align}
		\Psi^l=&(K-l-2)(K-l-3)(l+2)(l+1)  
		+(K-l)(K-l-1)(l+3)(l+2) \notag\\
		  &-2(K-l-1)(K-l-2)(l+3)(l+1). \label{psi}
	\end{align} 
	The aim is to investigate how $\Psi^l$ varies with $l$ to find the minimum value of $\Psi^l$. Hence, we reformulate \eqref{psi} into 
	\begin{equation}
		 \Psi^l=(K-l-1)(l+3)(K+l+2)-(K-l-2)(l+1)(K+l+3)  
		=2(K+1)K. \label{psi1}
	\end{equation}    
	Now we can obtain that 
	\begin{equation}
		\hat{e}_1^l(K-1)=2(K+1)K -\Psi^l  
		=2(K+1)K-2(K+1)K=0,
	\end{equation}
	which states that $\hat{e}_1^l(K-1)$ is constant and always non-negative. As is proved previously that when $l$ is fixed, $\hat{e}_1^l(I)$ decreases with the increase of $I$, it holds true that 
	\begin{equation}
		\min_{l, I}~{\hat{e}_1^l}(I)=\min_l {\hat{e}_1^l}(K-1) \ge 0,~\text{if}~K \ge I+1.
	\end{equation}
	
	To summarize, it has been proved that $\hat{e}_1^l(I) \ge 0,~l \in [0:I-1],~I \in [K-1],~K \in [3:+\infty]$, which guarantees $e_1^l \ge 0$ in the considered interval. To be exact, we obtain $\hat{e}_1^l(I) > \hat{e}_1^l(K-1) = 0,~I \in [K-2]$. Hence, the second derivative of the coefficients is positive for both $e_1^l$ and $e_2^l$ when $I \in [K-2]$ with an exception of constant second derivative at $0$ when $I=K-1$. This ends the proof of Lemma~\ref{lemma-step1}. 
	
	\subsection{Proof of Lemma~\ref{deltaGstart}}
	To proceed, we reformulate $\Delta_{\tilde{G}_1}(t+1)$ into
	\begin{equation}
		\Delta_{\tilde{G}_1}(t+1)={t+1 \choose t}/{K \choose t}=\frac{(K-t)\dots2}{K\dots(t+2)}. 
	\end{equation}
	Apparently, $\Delta_{\tilde{G}_1}(t+1)$ is subject to $t$ but is independent on $I$. Then we reformulate $\Delta_{\tilde{G}_1}(t+1)$ into a function of $t$ defined as $\omega_1(t)=\Delta_{\tilde{G}_1}(t+1)$. In consideration of different values of $t$, it can be obtained that 
	\begin{align*} 
		&\omega_1(t)=\Delta_{\tilde{G}_1}(t+1)=\frac{(K-t)\dots2}{K\dots(t+2)}, \\  
		&\omega_1(t+1)=\Delta_{\tilde{G}_1}(t+2)=\frac{(K-t-1)\dots2}{K\dots(t+3)}, \\
		&\Omega_1(t)=\frac{\omega_1(t+1)}{\omega_1(t)}=-1+\frac{K+2}{K-t},
	\end{align*}
	where $\Omega_1(t)$ denotes the first derivative of $\omega_1(t)$ with regarding to $t$ in division form when $t \in [1:K-2]$. Apparently, $\Omega_1(t)$ increases with regarding to $t$. Let $t^*$ satisfy $\Omega_1(t^*)=1$ which implies that $\omega_1(t^*)=\omega_1(t^*+1)$. It then follows that $t^*=\floor{K/2}-1$. Consequently, it holds true that
	\begin{align}\label{deltaGOmega}
		\Omega_1(t) \left\{\begin{array}{cl}
			\le 1, & \mbox{if}~~1 \le t \le \floor{K/2}-1, \\
			\ge 1, & \mbox{if}~~ \ceil{K/2}-1 \le t \le K-2.
		\end{array}\right. 
	\end{align} 
	Hence, $\omega_1(t)$ firstly decreases with $t$ when $1 \le t \le \floor{K/2}-1$, and then increases when $\ceil{K/2} \le t \le K-2$. The maximum value of $\omega_1(t)$ is given by 
	\begin{equation}
		\max_t~\omega_1(t)=\max\{\omega_1(1), \omega_1(K-2)\}=\frac{2}{K}, 
	\end{equation}
	where $\omega_1(1)=\omega_1(K-2)=2/K$.
	
	Next, we focus on how $\Delta_{\tilde{G}_2}(t+1)$ varies with $t$. Similarly, it is derived that 
	\begin{align*} 
		&\omega_2(t)=\frac{1}{\left(t+1\right)\left(1+(K-t-2)t/K\right)}, \\
		&\omega_2(t+1)=\frac{1}{\left(t+2\right)\left(1+(K-t-3)(t+1)/K\right)}, \\
		&\Omega_2(t)=\frac{\omega_2(t+1)}{\omega_2(t)}=\frac{\left(t+1\right)\left(1+(K-t-2)t/K\right)}{\left(t+2\right)\left(1+(K-t-3)(t+1)/K\right)} \notag \\ 
		&~~~~~~=\frac{A}{A+B}, 
        ~~~~t \in [1:K-2],~K \ge 3, 
	\end{align*}
	where the definitions of $\omega_2(t)$ and $\Omega_2(t)$ are similar to $\omega_1(t)$ and $\Omega_1(t)$. The auxiliary variables are defined as 
	\begin{align*}
     	A&=\left(t+1\right)\left(K+(K-t-2)t\right), \\
		B&=-3t^2+(2K-9)t+3K-6.   
	\end{align*}
	The sign of $B$ is important. Since $B$ is a typical quadratic equation of $t$, the positive solution to $B=0$ is  
	\begin{equation*}
		t_2=\frac{2K-9+\sqrt{4K^2+9}}{6}  
		  \le K-2.  
	\end{equation*}
	Thus, we derive $B \ge 0$, if $1 \le t \le t_2$, and $B< 0$, if $t_2 < t \le K-2$.
	As a result, $\Omega_2(t)$ becomes 
	\begin{align} 
		\Omega_2(t) \left\{\begin{array}{cl}
			\le 1, & \mbox{if}~~1 \le t \le t_2,\\
			> 1, & \mbox{if}~~ t_2 < t \le K-2,
		\end{array}\right. 
	\end{align} 
	which means that $\omega_2(t)$ decreases with regarding to $t$ when $1 \le t \le t_2$ and then increases when $t_2 < t \le K-2$. 
	
	Now we compare $\omega_1(K-2)$ and $\omega_2(K-2)$ with $t=K-2$. As mentioned before, $\omega_1(1)=\omega_1(K-2)=2/K$.  
	\begin{equation*}  	
		\omega_2(K-2)=\frac{1-M/N}{1+M/N} 
		=\frac{1}{K-1} > \omega_1(K-2).
	\end{equation*}
	On the contrary, when $t=1$, we obtain
	\begin{equation*}  	
		\omega_2(1)=\frac{1}{2(1+(K-3)/K)}=\frac{1}{2(2-3/K)} \in [\frac{1}{4}, \frac{1}{2}].
	\end{equation*} 
	Comparing $\omega_1(1)$ and $\omega_2(1)$, we let $\omega_2(1)=\omega_1(1)$ and obtain  
	\begin{equation*}  	
	 (K-2)(K-6)=0.
	\end{equation*}   
	Therefore, it is proved that 
	\begin{align} \label{deltaG2}
		\left\{\begin{array}{cl}
			\Delta_{\tilde{G}}(2)=\omega_2(1)-\omega_1(1)<0, & \mbox{if}~~K \in [3:6),\\
			\Delta_{\tilde{G}}(2)=\omega_2(1)-\omega_1(1)\ge 0, & \mbox{if}~~K \in [6:+\infty].
		\end{array}\right. 
	\end{align}
	
	For the point next to the maximum $t$ ($t=K-3$), we get
	\begin{align*}  	
		\omega_2(K-3)&=\frac{1}{(K-2)(1+\frac{K-3}{K})},\\
		\omega_1(K-3)&=\frac{6}{K(K-1)}.
	\end{align*}
	Similarly by comparison, we obtain  
	\begin{align} 
		\left\{\begin{array}{cl}
			\omega_2(K-3)<\omega_1(K-3), & \mbox{if}~~3 \le K \le 8,\\
			\omega_2(K-3)>\omega_1(K-3), & \mbox{if}~~ K \ge 9.
		\end{array}\right. 
	\end{align}
    Consequently, $\Delta_{\tilde{G}}(t+1)$ at $(t=K-3)$ becomes
	\begin{align}\label{deltaGK-3} 
		\Delta_{\tilde{G}}(K-2)\left\{\begin{array}{cl}
			<0, & \mbox{if}~~3 \le K \le 8,\\
			>0, & \mbox{if}~~ K \ge 9.
		\end{array}\right. 
	\end{align}
	The conclusion on $\Delta_{\tilde{G}}(t+1)$ with $t=1$ and $t=K-3$, which are given in \eqref{deltaG2} and \eqref{deltaGK-3} can be utilized to analyze the value of $\Delta_{\tilde{G}}(t+1)$ when $t=1, \dots, K-4, K \ge 5$. In particular, for $t \in [5:8]$, it is easy to obtain that 
	\begin{align}\label{deltaGK-5} 
		\Delta_{\tilde{G}}(t+1) \left\{\begin{array}{cl}
			>0, & \mbox{if}~~t \in [K-4], K \in [6:8], \\
			<0, & \mbox{if}~~t \in [K-4], K=5.
		\end{array}\right. 
	\end{align}
	When $K \in [9:+\infty]$, the sign of $\Delta_{\tilde{G}}(t+1)$ with $t=K-3$ has been clarified above as $\Delta_{\tilde{G}}(t+1)>0$ for $t=K-3$. We can assume that there is some $t$ satisfying $\Delta_{\tilde{G}}(t+2)>0$, i.e. $\omega_2(t+1) > \omega_1(t+1)$. Then, $\Delta_{\tilde{G}}(t+1)$ simplifies to 
	\begin{align*} 
		\Delta_{\tilde{G}}(t+1)  
		=&\Delta_{\tilde{G}_2}(t+1)-\Delta_{\tilde{G}_1}(t+1) \\
		=&\omega_2(t)-\omega_1(t) 
		=\frac{\left[\omega_2(t+1)\Omega_1(t)-\omega_1(t)\Omega_2(t)\right]}{\Omega_1(t)\Omega_2(t)}.
	\end{align*} 
	Next, we compare $\Omega_1(t)(>0)$ and $\Omega_2(t)(>0)$ as 
	\begin{equation*}
		\Omega_1(t)-\Omega_2(t)  
		=\frac{t+2}{K-t}-\frac{\left(t+1\right)\left(K+(K-t-2)t\right)}{\left(t+2\right)\left(K+(K-t-3)(t+1)\right)}. 
	\end{equation*}
	It can further simplify to
	\begin{align*} 
		&(t+2)^2\left(K+(K-t-3)(t+1)\right) - (K-t)(t+1)\left(K+(K-t-2)t\right) \notag \\ 
		=&AB-CD,  
	\end{align*}
	where auxiliary variables $A=t^2+4t+4$, $B=K-2t-3$, $C=Kt-t^2+K-t$ and $D=Kt+K-2t^2-6t-4$, respectively. It follows that  
	\begin{align*}  	
		C&=A+(t+1)(K-2t-4)+t, \\
		D&=B+t(K-2t-4)-1. 
	\end{align*} 
	When $t>\ceil{K/2}-2$, $C<A, D<B$ holds true. Since integer $t$ is assumed, we obtain that $(K-2t-4) \le -1$ and $(t+1)(K-2t-4)+t<0$. It then holds true that $AB-CD >0$ and consequently $\Omega_1(t)>\Omega_2(t)$. Substituting this back and performing recursion from $t=K-4$, we obtain 
	\begin{equation*}
		\Delta_{\tilde{G}}(t+1)>0,~~~~\mbox{if}~t \in (\ceil{K/2}-2, K-4],~K>9.   
	\end{equation*} 
	Similarly, it can be proved that $\Delta_{\tilde{G}}(t+1)>0$ for $t \in [\ceil{K/2}-2], K>9$ utilizing recursion from $\Delta_{\tilde{G}}(2)=\omega_2(1)-\omega_2(1)>0$. In this case, we assume there is some $t$ satisfying $\Delta_{\tilde{G}}(t+1)>0$, then discuss the sign of $\Delta_{\tilde{G}}(t+2)$ given by
	\begin{equation*}
		\Delta_{\tilde{G}}(t+2)
		=\omega_2(t)\Omega_2(t)-\omega_1(t)\Omega_1(t).
	\end{equation*}
	Using the same strategy, $\Delta_{\tilde{G}}(t+1)>0$ for $t \in [\ceil{K/2}-2], K>9$ can be proved as well.  
   This ends the discussion on the sign of $\Delta_{\tilde{G}}(t+1)$ for all $t \in [K-2]$ and $K>3$.

	
	\bibliographystyle{IEEEtran}	
	\bibliography{CodedCachingUserInactivityTrans}

\end{document}